\newcommand{\G}{{\cal G}}
\newcommand{\EX}{{\cal X}}
\newcommand{\J}{{\cal J}}
\newcommand{\N}{{\mathbb{N}}}
\newcommand{\R}{\mathbb{R}}
\newcommand{\Ho}{{\cal H}}
\newcommand{\M}{{\cal M}}
\newcommand{\Lc}{{\cal L}_3}
\newtheorem{defin}{Definition}[section]
\newtheorem{thm}{Theorem}[section]
\newtheorem{prop}[thm]{Proposition}
\newtheorem{claim}[thm]{Claim}
\newtheorem{remark}[thm]{Remark}
\newtheorem{lemma}[thm]{Lemma}
\newtheorem{con}[thm]{Corollary}
\newcommand{\arx}[1]{\href{http://arxiv.org/abs/#1}{\texttt{arXiv:#1}}}
\date{}
\title{Mixing Time for Some Adjacent Transposition Markov Chains}
\author{Shahrzad Haddadan, Peter Winkler}
\begin{document}

\maketitle

\begin{abstract}

We prove rapid mixing for certain Markov chains on the set $S_n$ of permutations on $1,2,\dots,n$ in which adjacent transpositions
are made with probabilities that depend on the items being transposed.  Typically, when in state $\sigma$, a position $i<n$ is chosen
uniformly at random, and $\sigma(i)$ and $\sigma(i{+}1)$ are swapped with probability depending on $\sigma(i)$ and $\sigma(i{+}1)$.
The stationary distributions of such chains appear in various fields of theoretical computer science \cite{Wilson, Self1, Mallow},
and rapid mixing established in the uniform case \cite{Wilson}.

Recently, there has been progress in cases with biased stationary distributions \cite{Benjamini, Dana}, but there are wide classes of
such chains whose mixing time is unknown.  One case of particular interest is what we call the ``gladiator chain,'' in which each number $g$
is assigned a ``strength'' $s_g$ and when $g$ and $g'$ are adjacent and chosen for possible swapping, $g$ comes out on top with probability
$s_g/(s_g + s_{g'})$.   We obtain a polynomial-time upper bound on mixing time when the gladiators fall into only three strength classes.

A preliminary version of this paper appeared as ``Mixing of Permutations by Biased Transposition'' in STACS 2017 \cite{Stacs}.
 \end{abstract}

\section{Introduction}\label{intro}
For $n\in \N$, let $S_n$ be the set of all permutations of the numbers $1,2,\dots ,n$.  One can think of a permutation as the order
in which a search engine arranges its results \cite{Mallow}, the order in which a self organizing list arranges its items \cite{Self1,Self2},
or the order the playing cards appear after shuffling \cite{Wilson, Diaconis}; each of these suggests different probability distributions on $S_n$.
Taking samples from such distributions is a useful task which can be tackled using a Markov chain, in particular when dynamic programming approaches
fail to have a polynomial runtime.\footnote{We note here that for the particular case of our study, i.e., gladiators with constant number of
strengths, the dynamic programming approach is efficient. However, the mixing problem is still interesting for at least two reasons:
(1) As discussed in the introduction, a self-organizing list is basically a Markov chain with high mixing time. Thus, analyzing the gladiator
chain is closely related to studying this data structure's performance.  (2) Dynamic programming algorithms would require exponential
time when we have polynomial number of teams. Thus, employing Markov chains could provide an efficient sampling tool in such cases.} 

\smallskip
A natural Markov chain on $S_n$ picks a number $1\leq i\leq n{-}1$ uniformly at random and from state $\sigma$, puts $\sigma(i{+}1)$
ahead of $\sigma(i)$ with probability $p_{\sigma(i),\sigma(i+1)}$. We call such chains \emph{adjacent transposition} Markov chains. 

\medskip 

In this paper, we consider the total variation mixing time, which is defined as the number of steps required before the total variation distance
between the distribution of the current state and stationarity is less than $\epsilon$ (where $\epsilon$ is some fixed convergence factor).
For Markov chain $\cal M$ we denote this time by $t_{\epsilon}({\cal M})$, or if $\epsilon=1/4$, simply by $t({\cal M})$.
 
\smallskip

Jim Fill \cite{FillConj} conjectured that if an adjacent transposition Markov chain is monotone, then it is rapidly mixing.
Monotonicity in this context means that for all $i,j$ satisfying $1\leq i <j  \leq n$, $p_{i,j}\geq 1/2$, $ p_{i,j-1}\leq p_{i,j}$, and 
$p_{i+1,j}\leq p_{i,j}$  \cite{FillConj}. Furthermore, his conjecture asserts ``the simple chain'' whose stationary distribution is uniform
has the highest spectral gap among all monotone adjacent transposition chains.\footnote{ The spectral gap is another measure of mixing.
Here, we are interested in total variation mixing time which, in this case, is within a polynomial factor of the spectral gap.}

\smallskip

Here we provide a brief history of the results  on the adjacent transposition Markov chains.   All of these chains are monotone and rapidly mixing.
Wilson and Benjamini's papers \cite{Wilson, Benjamini} led to Fill's conjecture \cite{FillConj}; Bhakta et al.\ \cite{Dana} verified the conjecture
in two cases.  The current paper, as well as a recent result by Miracle et al. \cite{SaraAmanda}, study the so-called ``gladiator chain'' under
certain conditions, and verify Fill's conjecture in limited cases. We will define the gladiator chain and present a few of its applications
later in this introduction. 

\medskip

 \textbf{1. The simple chain. } In the case where $p_{i,j}=1/2$ for all $i$ and $j$, the chain will have a simple description:
Given a permutation $\sigma$, pick two adjacent elements uniformly at random, and flip a fair coin to decide whether to swap them.
We call this chain, whose stationary distribution is uniform, the \emph{simple} chain. Getting precise mixing results for this chain turned
out not to be simple; many papers targeted this problem \cite{Simple1,ComparisonMethodDiaconis}, and finally Wilson \cite{Wilson} showed
the mixing time for this chain is $\Theta(n^3 \log n)$ (that is, he obtained lower and upper bounds within a constant factor).
 
 \smallskip
\textbf{2. The constant-bias chain.} After Wilson's paper, Benjamini et al. \cite{Benjamini} studied the case where $p_{i,j}=p>1/2$ for all $i<j$,
and $p_{j,i}=1{-}p$.  The stationary distribution of this chain is the one assigning a probability  proportional to $p^{inv(\sigma)}$,
to each $\sigma\in S_n$ where $inv(\sigma)$ is the number of inversions in $\sigma$. This distribution appears in statistics and machine
learning since it is the distribution generated by the ``Mallows model'' \cite{Mallow, Mallow2}.

Benjamini et al. \cite{Benjamini}, showed that the constant biased Markov chain is closely related to another Markov chain known as
the \emph{asymmetric simple exclusion process}, and both chains mix in $\Theta(n^2)$ steps. We will talk more about exclusion processes
later on in this introduction.

 \smallskip

\textbf{3. ``Choose your weapon" and ``league hierarchy" chains.} The following two special cases were studied by Bhakta et al. \cite{Dana}.
In the \emph {choose your weapon chain} $p_{i,j}$ is only dependent on $i$, and the \emph{league hierarchy chain} is given by a binary tree
$T$ with $n$ leaves. Each interior node $v$ of $T$ is labeled with some probability $1/2 \leq q_v\leq 1$, and the leaves are labeled by numbers
$1\dots n$. The probability of putting $j$ ahead of $i$ for $j>i$ is equal to $p_{i,j}=q_{ j \wedge i}$ where $j\wedge i$ is the node that
is the lowest common ancestor of $i$ and $j$ in $T$. Bhakta et al.\ showed that the choose your weapon chain mixes in ${\cal O}(n^8\log n)$
steps and the league hierarchy chain in ${\cal O}(n^4 \log n)$ steps. 
\medskip

Here we are interested in \emph{gladiator} chains, which constitute a subclass of the monotone adjacent transposition chains. Gladiator chains
have connections to self organizing lists, and were introduced by Jim Fill.
\medskip

Fill was interested in probabilistic analysis of algorithms for \emph{self-organizing lists} (SOLs). Self-organizing lists are data structures 
that facilitate linear searching in a list of records; the objective of a self-organizing list is to sort the records in non-decreasing order
of their access frequencies \cite{Self1}. Since these frequencies are not known in advance, an SOL algorithm aims to move a particular
record ahead in the list when access on that record is requested. There are two widely used SOL algorithms: the \emph{move ahead one}
algorithm (MA1) and the \emph{move to  front }algorithm (MTF). In MA1, if the current state of the list is
$(x_1,x_2,\dots,x_{i-1}, x_i,x_{i+1},\dots , x_n)$ and the $i$th record is requested for access,  it will go ahead in the list only one
position and the list will be modified to $(x_1,x_2,\dots,x_i,x_{i-1},x_{i+1},\dots, x_n)$.  In MTF it will go to the front and the
list will be modified to $(x_{i},x_1,x_2,\dots,x_{i-1},x_{i+1},\dots, x_n)$. It appears that  MA1 should perform better than MTF when the
list is almost sorted and worse when the low frequency records are standing in front; although this has been confirmed by simulations,
it has not been analytically confirmed \cite{Self2}.  Considering the adjacent transposition Markov chain corresponding to MA1, Fill shows
\cite{FillConj} that there are cases in which the chain is not rapidly mixing. Hence, he poses the question of sampling from the stationary
distribution of MA1, and he introduces the gladiator chain which has the same stationary distribution as MA1 and seems to be rapidly mixing for arbitrary
choice of parameters.

In the gladiator chain, each element $i$ can be thought of as a gladiator with strength $s(i)$. Every permutation of numbers $1,2,\dots n$
can be thought of as a ranking of gladiators. In each step of Markov chains we choose $1\leq k < n$ uniformly at random, i.e., we choose
adjacent gladiators $\sigma(k)=i$ and $\sigma(k+1)=j$. These gladiators will fight over their position in ranking. With probability
$p_{j,i}=s(i)/(s(i)+s(j))$, gladiator $i$ will be the winner of the game and will be placed ahead of $j$ in $\sigma$ if he isn't already.
With probability $1{-}p$, $j$ is put ahead of $i$.  If Fill's conjecture holds, gladiator chains must mix rapidly. 
 \medskip

\textbf{Exclusion processes.} A related Markov chain which has received a lot of attention is the \emph{exclusion process} (\cite{exc1, exc2}).
In this chain we have a graph  $G= \langle V, E\rangle$  and  $m< \vert V\vert$ particles  on the vertices of $G$. The sample space is the
set containing all the different placements of the $m$ particles on vertices of $G$. At each step of the Markov chain we pick a vertex $v$
uniformly at random with probability  $1/\vert V \vert$ and  one of its adjacent vertices $w$ with probability  $1/d(v)$. If there is a
particle in one of the vertices and not the other one, we swap the position of the particle with a constant probability $p$. We are interested
in the \emph{linear} exclusion process when the graph is a finite path with $n$ vertices.  As mentioned before, the linear exclusion process
was studied by Benjamini et al.\ \cite{Benjamini} and is known to be mixing in time $\Theta(n^2)$.\footnote{Benjamini et al.\ use this
result to prove that the constant biased adjacent transposition chain is rapidly mixing.} Later, Greenberg et al.\ \cite{Greenberg}
presented a simpler proof. 

 \medskip
\textbf{Our Contribution.}
We study the gladiator chain when the gladiators fall into a constant number of teams, gladiators in each team having the same strength
(Definition \ref{defteams}).  We then extend the definition of linear exclusion process (studied by Benjamini et al.) by allowing particles
of different types to swap their positions on a line. We call this new chain a \emph{linear particle system} (Definition \ref{defparts}).
We will show that mixing results for linear particle systems can produce mixing results for gladiator chains (Theorem \ref{reduction}). 

In particular, we study the linear particle system in which there are three particle types, and in Theorem \ref{mainthm} we extend
Benjamini et al.'s result by  showing the three particle system mixes rapidly. Having Theorem \ref{mainthm} we conclude that the
following adjacent transposition chains mix rapidly, and hence confirming Fill's conjecture in these cases: The gladiator chain
when gladiators fall into three teams of same-strength gladiators; and the league hierarchy chain for ternary trees
(extending Bhakta et al.'s work \cite{Dana}). 
\medskip

\textbf{Remark.} We believe linear particle systems, like exclusion processes, are interesting Markov chains that may appear as components
of other Markov chains, and thus would facilitate studying mixing times of other chains. For instance, in Section \ref{trees} of this paper,
by using Theorem \ref{mainthm} we extend a result about binary trees to ternary trees.  As another example, we remind the reader of the
correspondence between the exclusion process and the Markov chains on the lattice paths in an $n\times m$ rectangular lattice
(Figure \ref{figlatticepath}). Similarly, there is a correspondence between the linear particle systems having $k$ particles and the lattice
paths in $k-$dimensional lattices (Figure \ref{figlatticepath}). Some Markov chains defined on lattice paths in a $k-$dimensional rectangle
have already been studied by Greenberg et al.\ \cite{Greenberg}.

\smallskip

We remark here that following our result in STACS 2017 \cite{Stacs}, Miracle et al.\ \cite{SaraAmanda} studied the mixing time of
linear particle system when the number of particles is a constant $k$, and showed the mixing time is upper bounded by $n^{2k+4}$.
With different techniques from ours, they  prove the mixing time of gladiator chains with a constant number of teams is upper bounded
by $n^{2k+6}\log k$. The mixing time for linear particle systems and gladiator chains with teams, remains an open problem in the cases
in which the number of particle types or teams is more than a constant. 

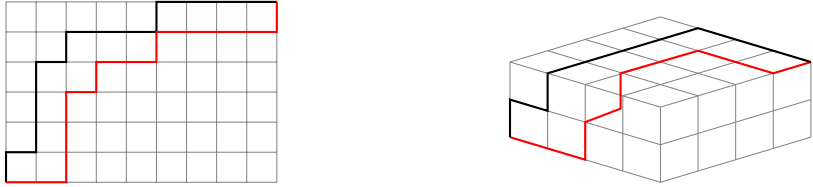
\begin{figure}[!h]\label{figlatticepath}
\smallskip
\begin{tikzpicture}
\draw[step=0.4cm,gray,very thin] (-1.2,-1.2) grid (2.4,1.2);
\draw[thick](-1.2,-1.2) --(-1.2,-0.8)--(-0.8,-0.8)--(-0.8,-0.4)--(-0.8,0.0)--(-0.8,0.4)--(-0.4,0.4)
--(-0.4,0.8) --(0.0,0.8)--(0.4,0.8)--(0.8,0.8)--(0.8,1.2)--(1.2,1.2)--(1.6,1.2)--(1.8,1.2)--(2.4,1.2) ;
\draw[red,thick](-1.2,-1.2) --(-0.8,-1.2)--(-0.4,-1.2)--(-0.4,-0.8)--(-0.4,-0.4)--(-0.4,0.0)--(0.0,0.0)
--(-0.0,0.4) --(0.4,0.4)--(0.8,0.4)--(0.8,0.8)--(1.2,0.8)--(1.6,0.8)--(2.0,0.8)--(2.4,0.8)--(2.4,1.2);
\draw[step=0.4cm, gray, very thin] (5.5,-0.6) -- (7.5,-1.2)--(9.5,-0.6);
\draw[step=0.4cm, gray, very thin] (5.5,-0.1) -- (7.5,-0.7)--(9.5,-0.1);
\draw[step=0.4cm, gray, very thin] (5.5,0.4) -- (7.5,-0.2)--(9.5,0.4);
\draw[step=0.4cm, gray, very thin] (5.5,0.4) -- (7.5,1.0)--(9.5,0.4);
\draw[step=0.4cm, gray, very thin]  (6.0,0.55)-- (8.0,-0.05);
\draw[step=0.4cm, gray, very thin]  (6.5,0.70)-- (8.5,0.10);
\draw[step=0.4cm, gray, very thin]  (7.0,0.85)-- (9.0,0.25);
\draw[step=0.4cm, gray, very thin]  (6.0,0.25)--(8.0,0.85);
\draw[step=0.4cm, gray, very thin]  (6.5,0.1)--(8.5,0.7);
\draw[step=0.4cm, gray, very thin]  (7.0,-0.05)--(9,0.55);

\draw[step=0.4cm, gray, very thin]  (7.5,-0.2)--(7.5,-1.2);
\draw[step=0.4cm, gray, very thin]  (7.0,-0.05)--(7.0,-1.05);
\draw[step=0.4cm, gray, very thin]  (8.0,-0.05)--(8.0,-1.05);
\draw[step=0.4cm, gray, very thin]  (6.5,0.10)--(6.5,-0.90);
\draw[step=0.4cm, gray, very thin]  (8.5,0.10)--(8.5,-0.90);
\draw[step=0.4cm, gray, very thin]  (6.0,0.25)--(6.0,-0.75);
\draw[step=0.4cm, gray, very thin]  (9.0,0.25)--(9.0,-0.75);
\draw[step=0.4cm, gray, very thin]  (5.5,0.40)--(5.5,-0.60);
\draw[step=0.4cm, gray, very thin]  (9.5,0.40)--(9.5,-0.60);
\draw[red,thick](5.5,-0.6)--(6.0,-0.75)--(6.5,-0.90)--(6.5,-0.4)--(6.97,-0.22)--(6.97,0.25)--
(8,0.55)--(9.0,0.25)--(9.5,0.40);
\draw[black,thick](5.5,-0.6)--(5.5,-0.1)--(6.0,-0.25)--(6.0,0.25)--(8.0,0.85)--(9.5,0.40);
\end{tikzpicture}
\caption{The correspondence between lattice paths and linear particle systems:
The  picture on the left  illustrates two paths in a two dimensional lattice; the red one corresponds to
001110100100001 and the black one corresponds to 10110100010000.
The picture on the right illustrates two paths in a three dimensional lattice; the red one corresponds to
0012122002 and the black one corresponds to 1012222000.
}
\end{figure}

\medskip
Definitions and results are presented in Section \ref{Section1}, along with the correspondence between the gladiator chains and the
linear particle systems. Section \ref{proof} contains the proof that the linear three-type system mixes rapidly under certain conditions.
In Section \ref{trees}, we discuss the league hierarchy chain and our result for ternary trees.  

\section{Definitions and Results}\label{Section1}

\begin{defin}\textbf{Gladiator chain} (Playing in teams).\label{defteams}
Consider the Markov chain  on state space $S_n$ that has the following properties: The set  $[n]$ (i.e. gladiators)  can be  partitioned
into subsets: $T_1 , T_2, \dots , T_k $ ($k$ teams). We have the following strength function: $s:[n]\rightarrow \R$, $s(g)=s_j$ iff $g\in T_j$.
At each step of Markov chain,  we choose $i\in [n{-}1]$ uniformly at random. Given that we are at state $\sigma$, and $\sigma(i)=g,
\sigma(i{+}1)=g'$, we put $g$ ahead of $g'$ with probability $\frac{s(g)}{s(g)+s(g')}$.  We denote a gladiator chain having $n$ gladiators
playing in $k$ teams by ${\G}_k(n)$.\footnote{Although the notation $\G_k(n_1,n_2,\dots,n_k)$ would be more precise ($n_i$ being cardinality
of $T_i$), we avoid using it for simplicity and also because our analysis is not dependent on $n_1, n_2, \dots, n_k$.}

\end{defin}

This is a reversible Markov chain and the stationary distribution $\pi$ is
\begin{equation}
\hspace{2 cm}\pi(\sigma)=\prod_{\substack{i=1}}^{n} s(i)^{\sigma^{-1}(i)}/Z.
\quad \quad(Z \text{ is a normalizing factor.})
\end{equation}\label{stationary}

Note that by writing $\sigma(i)=g$ we mean gladiator $g$ is located at position $i$ in $\sigma$. By writing 
$\sigma^{-1}(g)$ we are referring to the position of gladiator $g$ in the permutation $\sigma$. We use this notation throughout the text
and for permutations presenting both gladiators and particles.

\begin{defin}\textbf{Linear particle systems.}\label{defparts}
Assume we have $k$ types of particles and of each type $i$, we have $n_i$ indistinguishable copies. Let $n=\sum_{i=1}^{k}n_i$. 
Let $\Omega_{n_1,n_1,\dots n_k}$ be the state space containing all the different linear arrangements of these $n$ particles. 
If the current state of the Markov chain is $\sigma$, choose $i\in [1,n-1]$ uniformly at random.  Let $\sigma(i)$ be of type $t$
and $\sigma(i+1)$ be of type $t'$. If $t=t'$ do nothing. Otherwise, put $\sigma(i)$ ahead of $\sigma(i+1)$ w.p.\ $p_{t,t'}$ and
put $\sigma(i+1)$ ahead of $\sigma(i)$ w.p.\ $1-p_{t,t'}$.  We denote the linear particle system having $n$ particles of
$k$ different types by ${\EX}_k(n)$.
\end{defin}
This chain  is also a reversible Markov chain. In the special case where $p_{t,t'}=\frac{s(t)}{s(t)+s(t')}$  the stationary distribution $\pi$ is
\begin{equation}
\hspace{2 cm}\pi(\sigma)=\prod_{\substack{i=1}}^{n} s(i)^{\sigma^{-1}(i)}/Z'.
\quad \quad(Z' \text{ is a normalizing factor.})
\end{equation}\label{eq2}

\begin{prop}
By regarding gladiators of equal strength as indistinguishable particles, we associate to any gladiator system a linear particle system.
\end{prop}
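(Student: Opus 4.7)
The plan is to exhibit the obvious projection $\phi : S_n \to \Omega_{n_1,\dots,n_k}$ that relabels each gladiator $g\in T_j$ by a type-$j$ particle, and to check that under this map the one-step dynamics of $\G_k(n)$ descend to the one-step dynamics of $\EX_k(n)$ with swap probabilities $p_{t,t'} = s_t/(s_t+s_{t'})$. Because gladiators within a team are interchangeable and the chain's swap probability depends only on the strengths involved, one expects the projection to be a valid lumping in the Kemeny--Snell sense; the proposition is essentially the verification of this fact.

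First I would unfold a step of $\G_k(n)$ from a state $\sigma$. Fix a position $i$, and set $g=\sigma(i)$, $g'=\sigma(i{+}1)$. If $g\in T_t$ and $g'\in T_{t'}$ with $t\ne t'$, then the probability $s(g')/(s(g)+s(g'))=s_{t'}/(s_t+s_{t'})$ of putting $g'$ ahead of $g$ depends only on the team labels, so $\phi$ applied to the outcome is exactly the outcome of the corresponding step of $\EX_k(n)$ from $\phi(\sigma)$ with the claimed $p_{t,t'}$. If instead $t=t'$, then swapping or not swapping both yield permutations mapped by $\phi$ to the same particle configuration, so $\phi(\sigma)$ is fixed; this is precisely the ``do nothing'' rule for same-type adjacent particles in $\EX_k(n)$. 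Summing the case analysis over the uniform choice of $i$ gives the lumping identity
\[
\sum_{\sigma''\in\phi^{-1}(\tau)} P_{\G}(\sigma,\sigma'') \;=\; P_{\EX}(\phi(\sigma),\tau),
\]
in which the right-hand side depends on $\sigma$ only through $\phi(\sigma)$, confirming lumpability.

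Finally I would match the stationary distributions: the product $\prod_i s(i)^{\sigma^{-1}(i)}$ in (1) is constant on each fiber $\phi^{-1}(\tau)$ (reordering gladiators within a team does not change the multiset of position--strength pairs), so the push-forward of $\pi_{\G}$ under $\phi$ is proportional to the product in (2), as desired; the normalizing factors $Z,Z'$ absorb the $\prod_j n_j!$ ratio. I do not anticipate a genuine obstacle here—the only place one must be a little careful is reconciling the neutral same-team swaps of $\G_k(n)$ with the explicit ``do nothing'' move of $\EX_k(n)$, and this is handled by the case split above.
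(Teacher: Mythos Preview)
The paper does not actually prove this proposition; it is stated without proof and treated as a self-evident observation, with the substantive content deferred to Theorem~\ref{reduction}. Your argument is a correct and careful verification of precisely the lumping that the authors take for granted, so you are not diverging from the paper's approach but rather filling in details the paper omits.
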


Note that the state space of the gladiator system has cardinality $n!$ for $n$ different gladiators but the linear particle system has only
$n!/(n_1!n_2!\dots n_k!)$ states, since particles of the same type are indistinguishable. Thus, $Z'\ll Z$. The following theorem,
whose proof will be presented later, shows the connection between the mixing times of the two chains.
\begin{thm}\label{reduction}
Let $t({\EX_k})$ and $t({\G}_k)$ be respectively the mixing times for a linear particle system and its corresponding gladiator chain.
Then, $t(\G_k) \leq \mathcal{O} (n^8)~ t(\EX_k)$.
\end{thm}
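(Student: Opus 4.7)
The strategy is to realize $\EX_k$ as a Markov chain projection of $\G_k$ under the ``forgetful'' map $\phi : S_n \to \Omega_{n_1,\dots,n_k}$ that replaces each gladiator by its team label, and then to relate the two mixing times via a decomposition theorem. First I would check that $\phi$ is a genuine Markov projection: because the swap probability $s(g)/(s(g)+s(g'))$ in Definition \ref{defteams} depends on $g,g'$ only through their teams, the induced dynamics on particle arrangements coincides with $\EX_k$. Second, I would observe that all gladiator configurations in a fiber $\phi^{-1}(\tau)$ share the same $\pi$-weight: permuting same-strength gladiators does not alter $\prod_i s(i)^{\sigma^{-1}(i)}$, so the conditional stationary distribution on each fiber is uniform.

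Next I would apply a standard decomposition theorem of Jerrum--Son--Sinclair--Vempala / Martin--Randall type, which bounds the spectral gap of $\G_k$ from below by a product of the projection's gap (corresponding to $\EX_k$) and the worst restricted-chain gap, then convert back to total-variation mixing at the cost of a $\log(1/\pi_{\min}) = O(n \log n)$ factor. The restricted chain on a fiber uses only same-team adjacent transpositions, which (since same-team gladiators have equal strength) swap with probability exactly $1/2$; thus the restricted chain factors across teams and each factor is an instance of Wilson's simple chain, bounded by $O(n^3 \log n)$. The remaining polynomial slack absorbs the slowdown from the position-selection step and the conversion constants, comfortably fitting within the claimed $O(n^8)$ budget.

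The principal technical obstacle is that the bare restricted chain on a fiber $\phi^{-1}(\tau)$ need not be irreducible: if, say, team-$T_j$ gladiators are separated by other teams in $\tau$, no same-team adjacent swap can exchange them, so the fiber is disconnected under the restricted dynamics and the decomposition theorem degenerates. I would handle this by comparison: augment the restricted chain with auxiliary ``teleport'' transitions that permute same-team gladiators within a fiber (these preserve the uniform fiber measure) and bound the Dirichlet form of the augmented chain against that of $\G_k$ using canonical paths through neighboring fibers, in the style of Diaconis--Saloff-Coste. An alternative, probabilistically cleaner, route is a two-phase coupling: first couple the projected particle chains, incurring $t(\EX_k)$ steps, and then appeal to the fact that during the ensuing $\G_k$ dynamics every pair of same-team gladiators repeatedly becomes adjacent, allowing a Wilson-style coupling of the internal orderings. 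Either route carries the same bookkeeping, and the $O(n^8)$ bound comes out with room to spare; the main care is in tracking the polynomial overhead in the comparison step so that no exponential factor sneaks in through $\log(1/\pi_{\min})$ or through the reduced swap rate on inhomogeneous fibers.
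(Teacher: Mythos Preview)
Your decomposition is the ``dual'' of the paper's, and this choice is exactly what creates the irreducibility obstacle you then have to work around. You project via $\phi$ onto particle arrangements, so that the projected chain is $\EX_k$ and the restricted chains live on fibers $\phi^{-1}(\tau)$ (within-team reorderings at a fixed arrangement). The paper does the opposite: it partitions $S_n$ by the tuple $(\sigma_1,\dots,\sigma_k)$ of within-team orderings, so that each restricted chain is a copy of $\EX_k$ (only different-team swaps preserve all internal orderings), and the macroscopic chain $\bar\G$ lives on $\prod_i S_{n_i}$ and moves by adjacent transpositions within a single team.

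This swap of roles is the key idea. In the paper's setup the restricted chains are automatically irreducible, and the non-adjacency problem you flagged is absorbed into the macroscopic transition probabilities, which by definition average over the stationary measure of a whole block. The paper then shows, by an explicit $(n{-}1)$-to-$1$ mapping argument, that the averaged probability of any adjacent-in-team transposition in $\bar\G$ is at least $1/2(n{-}1)^2$; a conductance comparison to Wilson's simple chain gives $t(\bar\G_i)\le n_i^8$, and the product bound (their Theorem~\ref{product}) finishes with $t(\bar\G)\le 2n^8$. The Decomposition Theorem then yields $t(\G_k)\le 4n^8\, t(\EX_k)$ directly, with no auxiliary comparison step.

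Your route can in principle be pushed through with the teleport-comparison or two-phase coupling you sketch, but both require nontrivial extra work precisely where the paper's choice of partition makes the argument clean. If you want to keep your projection viewpoint, the simplest fix is to notice that the roles of ``restricted'' and ``macroscopic'' can be exchanged, and then your Wilson bound and your observation that fibers carry uniform measure slot directly into the paper's argument.
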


Our main result, which extends the results of Benjamini et al.\ \cite{Benjamini} on exclusion processes, is the following:

\begin{thm}\label{mainthm}
Let $\EX_{3}(n)$ be a linear particle system of Definition \ref{defparts}, having particles of type A, B and C.
Assume that we have strength functions assigned to each particle type, namely $s_A< s_B<s_C$, and thus swapping probabilities
$p_{B,A}=s_A/(s_A+s_B)$, $p_{B,C}=s_C/(s_C+s_B)$ and $p_{A,C}=s_C/(s_A+s_C)$.
If $s_A/s_B,s_B/s_C<1/2$, then the mixing time of $\EX_{3}(n)$ satisfies $t({\EX}_{3}(n))\leq {\cal O}(n^{10}).$

\end{thm}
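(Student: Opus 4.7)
My plan is to prove $t(\EX_3(n)) = \mathcal{O}(n^{10})$ by a decomposition argument that reduces the three-type analysis to the two-type biased exclusion processes handled by Benjamini et al. The natural partition of $\Omega_{n_A, n_B, n_C}$ is by the positions of the A-particles (the weakest type): in each block the A-pattern is fixed and configurations differ only in how the B's and C's are arranged in the non-A slots. On each block the restriction chain consists of adjacent B--C swaps with bias $s_B/s_C < 1/2$, precisely Benjamini et al.'s constant-bias two-type exclusion process on each maximal A-free interval, and so mixes in $\mathcal{O}(n^2)$ steps.

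The projection chain then captures transitions between A-patterns, i.e., swaps of an A past an adjacent non-A particle. Morally this is itself a two-type A-vs.-non-A chain on a path of length $n$, but its effective swap rates are position-dependent because the non-A particle at a given interface might be a B or a C under the conditional stationary distribution. The assumption $s_A/s_B < 1/2$ (hence $s_A/s_C < 1/4$) makes the A-vs.-non-A bias uniformly strong at every site, and I would aim to bound the projection chain's mixing time by a polynomial in $n$ (my target is $\mathcal{O}(n^7)$), either by a further decomposition (e.g., also partitioning by C-positions) or by a direct canonical-paths or coupling argument that exploits the bias. Applying a Madras--Randall (or Jerrum--Son--Tetali--Vigoda) decomposition inequality then combines the projection and restriction bounds multiplicatively, and the additional $\log(1/\pi_{\min}) = \mathcal{O}(n)$ factor pushes the total to $\mathcal{O}(n^{10})$.

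The main obstacle, which accounts for the large exponent, is the projection chain. Two complications arise: a clean partition by A-positions is not ergodic under in-block B--C moves, since B's cannot cross A's, so the distribution of B's among A-free intervals is conserved and the partition must be refined (for example by recording the B-count in each A-free interval) at the price of a more intricate projection chain; and even once refined, the projection's transition rates at each site are annealed over a non-trivial conditional distribution of B vs.\ C. Selecting the right partition, verifying ergodicity on the blocks, and controlling the projection chain's mixing time using the strong-bias hypothesis is where the bulk of the technical work lies.
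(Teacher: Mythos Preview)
Your decomposition strategy is not the route the paper takes, and the obstacle you yourself flag is a genuine gap rather than a detail to be filled in later. Partitioning by $A$-positions gives restriction chains that are not ergodic: within a block the $A$'s act as walls, so the vector of $B$-counts in the maximal $A$-free intervals is conserved, and the ``restriction chain is a biased two-type exclusion process'' picture only holds interval-by-interval, not on the whole block. Your proposed fix---refining the partition to also record the $B$-count in each interval---does restore ergodicity on blocks, but now the projection chain lives on pairs ($A$-pattern, $B$-count vector), and a single $A$--$B$ or $A$--$C$ swap simultaneously shifts the $A$-pattern and reshapes the very interval structure on which the $B$-count vector is defined. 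That projection is no longer a two-type exclusion process in any recognizable sense, and neither Benjamini et al.'s result nor the observation that the $A$-versus-non-$A$ bias is uniformly bounded away from $1/2$ yields a polynomial bound on it. You have correctly located the difficulty, but the proposal contains no mechanism for controlling this projection; ``target $\mathcal{O}(n^7)$'' is a hope, not an argument. (Incidentally, $\log(1/\pi_{\min})$ is $\Theta(n^2)$ here, not $\mathcal{O}(n)$, since $\pi_{\min}\asymp q^{n(n+1)}$.)

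The paper sidesteps decomposition entirely for this theorem. It introduces an auxiliary chain $\EX_t(n)$ on the same state space whose moves are long-range: a \emph{Jump} swaps an $A$ and a $C$ across an intervening run of $B$'s, and a \emph{Hop} slides an $A$ (or $C$) across a run of $B$'s in one step. It then (i) compares $\EX_3$ to $\EX_t$ via the Diaconis--Saloff-Coste\,/\,Randall--Tetali comparison method, paying a polynomial factor, and (ii) bounds $t(\EX_t)$ directly by canonical paths. The congestion analysis for the canonical paths is where the hypothesis $s_A/s_B,\,s_B/s_C<1/2$ is actually used, through the $q$-binomial estimate ${m\choose r}_q<(1/q)^r$ for $q<1/2$. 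The point of the long-range Jump and Hop moves is precisely to let a canonical path fix the target configuration block by block without being trapped behind runs of $B$'s---exactly the wall phenomenon that breaks your decomposition.
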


\begin{remark}
The condition $s_A/s_B,s_B/s_C\leq 1/2$ comes from the following simple bound on $q$-binomials that we later prove in Lemma~\ref{qlemma}: 
If  $q<1/2$ then, 
${{m}\choose{r} }_q< 2^r < (\frac{1}{q})^r.$
Better bounds on $q$-binomials would allow the result to be improved.
\end{remark}

We will prove Theorem \ref{mainthm} in Section \ref{proof}.
Having Theorem \ref{mainthm}, we deduce the following case of Fill's conjecture: 
 
 \begin{con}
The mixing time of ${\G}_3(n)$ satisfies $t(\G_{3}(n))\leq {\cal O}(n^{18})$, provided $s_{A}/s_{B}<1/2$ and
$s_{B}/s_{C}<1/2$, where $C$ is the strongest playing team among the three, and  the gladiators in team $B$ are stronger than the gladiators in team $A$.
\end{con}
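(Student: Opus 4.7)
The corollary is essentially a direct combination of the two main theorems of the paper, so the proof plan is short. First I would invoke the reduction in Theorem \ref{reduction}, which bounds the mixing time of a gladiator chain $\G_k(n)$ in terms of the mixing time of the associated linear particle system $\EX_k(n)$: concretely, $t(\G_3) \le \mathcal{O}(n^8)\, t(\EX_3)$. To apply this, I need to first observe that the $\G_3(n)$ in the corollary corresponds under the bijection (grouping same-strength gladiators into indistinguishable particles) to a linear particle system with three types of particles $A$, $B$, $C$ of strengths $s_A < s_B < s_C$, with swap probabilities precisely $p_{B,A} = s_A/(s_A+s_B)$, $p_{B,C} = s_C/(s_B+s_C)$ and $p_{A,C} = s_C/(s_A+s_C)$. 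This is exactly the setup covered by Theorem \ref{mainthm}.

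Next I would check that the hypothesis of Theorem \ref{mainthm} is satisfied. The corollary assumes $s_A/s_B < 1/2$ and $s_B/s_C < 1/2$, which is verbatim the hypothesis of Theorem \ref{mainthm}. Applying Theorem \ref{mainthm} gives $t(\EX_3(n)) \le \mathcal{O}(n^{10})$.

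Combining the two bounds yields
\begin{equation*}
t(\G_3(n)) \;\le\; \mathcal{O}(n^8)\cdot t(\EX_3(n)) \;\le\; \mathcal{O}(n^8)\cdot \mathcal{O}(n^{10}) \;=\; \mathcal{O}(n^{18}),
\end{equation*}
which is the claimed bound.

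There is no real obstacle here beyond verifying the correspondence between the two chains, which is already recorded in the proposition preceding Theorem \ref{reduction}. The entire content of the corollary lives in Theorem \ref{mainthm} (the rapid mixing of the three-type particle system) and in Theorem \ref{reduction} (the $n^8$ comparison factor between the particle chain and the gladiator chain); so once those are in hand the corollary is a one-line multiplication of exponents.
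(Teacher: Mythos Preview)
Your proposal is correct and matches the paper's proof exactly: the paper's entire argument is the single line ``From Theorems \ref{mainthm} and \ref{reduction},'' which is precisely the multiplication $\mathcal{O}(n^8)\cdot\mathcal{O}(n^{10})=\mathcal{O}(n^{18})$ you spell out.
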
 
 
\begin{proof}
From Theorems \ref{mainthm} and \ref{reduction}. 
\end{proof}
 
We present the following corollary of Theorem \ref{mainthm} here and discuss it in full detail later in Section \ref{trees}.
 
\begin{con}(League hierarchies for ternary trees)\label{LeagueHi}
Let $T$ be a ternary tree  with $n$ leaves. The children of each interior node $v$ are labeled with labels $A(v)$, $B(v)$, and $C(v)$,
and each internal node has three strength values $s_{A(v)}$, $s_{B(v)}$, and $s_{C(v)}$. The leaves are labeled by numbers $1,2,\dots ,n$.
The probability of putting $j$ ahead of $i$ for $j>i$ is equal to $p_{i,j}=s_{X(v)}/(s_{X(v)}+s_{Y(v)})$ where $v$ is the node
that is the lowest common ancestor of $i$ and $j$ in $T$, and $X(v)$ is the child of $v$ which is an ancestor of $j$,
and $Y(v)$ is the child of $v$ which is an ancestor of $i$. If for each $v\in T$, $s_{A(v)}$, $s_{B(v)}$, and $s_{C(v)}$
satisfy the conditions in Theorem \ref{mainthm}, then the mixing time of the league hierarchy chain is bounded by $n^{14}\log n$.
\end{con}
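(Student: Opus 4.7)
The plan is to prove Corollary \ref{LeagueHi} by induction on the structure of the ternary tree $T$, using Theorem \ref{mainthm} as the engine at each internal node and a Markov chain decomposition theorem to glue levels together. This mirrors the route by which Bhakta et al.\ obtained an $O(n^4 \log n)$ bound for binary trees from the $O(n^2)$ bound for the constant-bias (two-type) chain; here the role of the two-type system is played by the three-type system of Theorem \ref{mainthm}.

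At the root $v$ of $T$, let $T_A, T_B, T_C$ be the three subtrees, with $n_A, n_B, n_C$ leaves. Call two permutations of $[n]$ equivalent if, upon replacing each leaf by the label ($A$, $B$, or $C$) of its root-subtree, the resulting words in $\{A,B,C\}^n$ coincide. Projecting the league-hierarchy chain onto these equivalence classes produces, by construction, precisely the three-type linear particle system $\EX_3(n)$ with swap probabilities determined by $s_{A(v)}, s_{B(v)}, s_{C(v)}$; the hypotheses of Theorem \ref{mainthm} are inherited from the hypotheses of the corollary, so the projected chain mixes in time $O(n^{10})$.

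Within each equivalence class only adjacent same-label pairs may swap. After a comparison argument that accounts for same-label leaves separated by leaves of other labels, the restricted dynamics decompose into (essentially) a product of three league-hierarchy chains on $T_A, T_B, T_C$, whose mixing times are bounded inductively. Combining the projected and restricted pieces via a decomposition theorem (for instance, Martin--Randall) yields a recursion of the form
\[
T(n) \le \mathrm{poly}(n) \cdot O(n^{10}) \cdot \max\bigl(T(n_A), T(n_B), T(n_C)\bigr),
\]
which, unrolled over the levels of $T$ in the Bhakta et al.\ fashion with $n^{10}$ substituted for their $n^2$, telescopes to the claimed $O(n^{14} \log n)$ bound.

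The principal obstacle is the comparison step needed inside each equivalence class: because leaves of other labels can sit between two same-label leaves, the restricted chain is not literally a product of three subtree chains, and one must show by a canonical-paths or coupling argument that the restricted chain is only a polynomial factor slower than the clean product chain. Keeping this polynomial small enough that the final exponent stays at $14$ — in particular, matching the additional $n^4 \log n$ factor Bhakta et al.\ incur over the base particle-system bound — is the main technical burden; the rest is bookkeeping inside the decomposition inequality and the induction.
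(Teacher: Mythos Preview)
Your inductive decomposition has a genuine gap: the Martin--Randall decomposition theorem (Theorem~\ref{decom}) gives a \emph{multiplicative} bound $t(\mathcal{M})\le 2\,t(\bar{\mathcal{M}})\max_i t(\mathcal{M}_i)$, so unrolling your recursion over the levels of $T$ produces a product of one $O(n^{10})$ factor per level, not a telescoping sum. For a balanced ternary tree this yields $n^{\Theta(\log n)}$, and for a caterpillar-shaped tree it is exponential; it does not collapse to $O(n^{14}\log n)$. Bhakta et al.\ do not in fact argue by induction on depth, so the phrase ``in the Bhakta et al.\ fashion'' is a misreading of their method.

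The paper avoids this blowup by working with all internal nodes at once rather than level by level. It introduces an auxiliary chain $\M_{tree}$ that allows long-range swaps of $\sigma(i)$ and $\sigma(j)$ whenever no leaf of the subtree rooted at $i\wedge j$ lies strictly between them; with this definition $\M_{tree}$ is literally a \emph{product} of $n{-}1$ independent three-particle systems, one per internal node (see Figure~7). The product theorem (Theorem~\ref{product}) then gives $t(\M_{tree})=O(n^{10}\log n)$ in a single shot, with no recursion. A one-time comparison argument (canonical paths routing each $\M_{tree}$ edge through $\Lc$) contributes the remaining $n^4$ factor, yielding $t(\Lc)\le n^4\,t(\M_{tree})=O(n^{14}\log n)$. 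The crucial structural point you are missing is that the auxiliary chain decomposes as an independent product across \emph{all} nodes simultaneously, which is what lets the product theorem replace your cascade of decomposition-theorem multiplications.
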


 \medskip
  
We finish this section by proving Theorem \ref{reduction}.
 
\subsection{Gladiators and Particles (Proof of Theorem \ref{reduction})}

Consider the gladiator chain ${\G_{k}(n)}$ for arbitrary $n$ being the number of gladiators  and $k$ the number of teams.
Assume that we have ${n}_i$ gladiators on team $i$; hence, $\sum_{i=1}^k{n}_i=n$.  At each step of the chain, one of two things is happening: 
\begin{enumerate}
\item Whisking: gladiators of the same team are fighting. 
\item Sifting:  gladiators of different teams are fighting.
\end{enumerate}
If we were restricted to  whisking steps the chain would be equivalent to a product of several simple chains analyzed by Wilson \cite{Wilson}.
If we were restricted to sifting steps the chain would be the linear particle system chain introduced in Definition \ref{defparts}. 
In order to study the mixing time of the gladiator chain we  analyze sifting and whisking steps separately, and then we employ the
following decomposition theorem: 

\begin{thm}\label{decom}  \textbf{Decomposition Theorem \cite{Decomposition}.}
Let $\cal M $ be a Markov chain on state space $\Omega$ partitioned into $\Omega_{1},\Omega_{2},\dots,\Omega_k$.
For each $i$, let ${\cal M}_i$ be the restriction of $\cal M$ to $\Omega_i$ that rejects  moves  going outside of $\Omega$.
Let $\pi_{i} (A)= \pi(A\cap \Omega_i)/\pi(\Omega_i)$ for $A\subseteq \Omega_i$. We define the Markov chain $\bar{\cal M}$ on
state space $\{1,\dots k\}$ as follows: $Pr_{\bar{\cal M}}(i,j)= \sum_{x\in \Omega_{i},y\in \Omega_j} \pi_i(x)Pr_{\cal M}(x,y)$,
where $Pr_{\cal M}$ and $Pr_{\bar{\cal M}}$ are transition probabilities of $\cal M$ and $\bar{\cal M}$ respectively. 
Then $$t({\cal M})\leq 2 t({\bar{\cal M}}) \max_i\{t({\cal M}_i)\}.$$

\end{thm}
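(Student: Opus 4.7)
The plan is to prove the decomposition bound by a two-level argument that tracks ``which block'' the chain is in (governed by $\bar{\cal M}$) and ``where in the block'' it is (governed by the restrictions ${\cal M}_i$), showing that mixing of $\cal M$ reduces to mixing both levels simultaneously.

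First, for any starting state $x \in \Omega_{i_0}$ I would split the total-variation error by block. Writing $\mu_T(i) := \sum_{y \in \Omega_i} P^T(x,y)$ for the probability of being in block $i$ after $T$ steps, and $\bar{\pi}(i) := \pi(\Omega_i)$, the triangle inequality gives
\begin{equation*}
\bigl\| P^T(x,\cdot) - \pi \bigr\|_{\mathrm{TV}} \;\le\; \|\mu_T - \bar\pi\|_{\mathrm{TV}} + \sum_{i} \mu_T(i)\,\bigl\| P^T(x,\cdot \mid \Omega_i) - \pi_i \bigr\|_{\mathrm{TV}}.
\end{equation*}
The second term is an average of intra-block TV distances; each summand is small once enough steps of $\cal M$ have been ``spent'' inside $\Omega_i$, because $\cal M$'s transitions that remain inside $\Omega_i$ are by definition ${\cal M}_i$'s transitions, so the restricted mixing time $t({\cal M}_i)$ governs this contribution.

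Second, to bound $\|\mu_T - \bar\pi\|_{\mathrm{TV}}$ I would construct a coupling between the block trajectory $(i(X_t))_t$ of $\cal M$ and a faithful copy of $\bar{\cal M}$. The key identity is built into the definition of $\bar{\cal M}$: its transition kernel is the $\pi_i$-weighted average of $\cal M$-transitions out of $\Omega_i$, which is exactly the law of the block trajectory of $\cal M$ once the intra-block conditional has relaxed to $\pi_i$. Thus as long as intra-block mixing has had time to operate between block-changing events, the block trajectory emulates $\bar{\cal M}$, and $t(\bar{\cal M})$ such ``effective'' block-steps bring $\mu_T$ within $1/8$ of $\bar\pi$. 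I would then interleave the two ingredients: running $\cal M$ for $T := 2\,t(\bar{\cal M})\,\max_i t({\cal M}_i)$ steps provides $t(\bar{\cal M})$ epochs of length $\max_i t({\cal M}_i)$ each (with a factor of two as buffer); in each epoch the intra-block conditional equilibrates to within $1/8$ of $\pi_i$, and the subsequent block transition is therefore $\bar{\cal M}$-distributed up to small error. Summing the block-level and intra-block error estimates yields $\|P^T(x,\cdot)-\pi\|_{\mathrm{TV}} \le 1/4$.

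The main obstacle is that the block trajectory of $\cal M$ is not itself Markovian---its transition law depends on the microscopic position inside each block---so identifying it with $\bar{\cal M}$ requires a coupling whose success probability depends on intra-block mixing error. The factor of $2$ in the statement is precisely the slack needed to absorb this slippage: one pays $\max_i t({\cal M}_i)$ ``real'' $\cal M$-steps per ``virtual'' $\bar{\cal M}$-step so that the true block transition is close to the idealized one. Making this rigorous amounts to a careful synchronized coupling in which designated sync times, spaced $\max_i t({\cal M}_i)$ apart, are used to transfer $\bar{\cal M}$-style merges onto $\cal M$, with a union bound over $t(\bar{\cal M})$ syncs delivering the claimed TV bound.
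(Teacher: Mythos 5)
The paper does not prove this theorem at all: it is quoted from Martin and Randall \cite{Decomposition}, where it is established through the variational characterization of the spectral gap (one decomposes the variance of a test function into within-block variances plus the variance of the block averages, and compares the Dirichlet forms of $\bar{\cal M}$ and of the restrictions ${\cal M}_i$ to that of $\cal M$). Your coupling/epoch simulation is therefore a genuinely different route, but it has a gap at its core. The restriction ${\cal M}_i$ is the chain that \emph{rejects} exiting moves; the law of $\cal M$ during the time it happens to spend in $\Omega_i$ is not that chain. Conditioning the trajectory of $\cal M$ on not yet having left $\Omega_i$ gives a Doob-conditioned (killed) chain whose long-run behavior is quasi-stationary rather than $\pi_i$, the chain may exit the block long before $t({\cal M}_i)$ steps have elapsed or may exit and re-enter many times inside one epoch, and nothing in the hypotheses converts $t({\cal M}_i)$ into control of the conditional distribution of $\cal M$ at the random exit times --- which is exactly what your claim ``the subsequent block transition is $\bar{\cal M}$-distributed up to small error'' needs. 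This obstruction (the block process is not Markov and cannot be synchronized with $\bar{\cal M}$ merely because the restrictions mix fast) is precisely why the standard proof works with Dirichlet forms instead of a coupling.

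Even granting a per-epoch coupling error, the bookkeeping does not close. You allot $t(\bar{\cal M})$ epochs of length $\max_i t({\cal M}_i)$ and finish with a union bound over the sync times, but if each sync contributes a constant error (say $1/8$) the accumulated error is of order $t(\bar{\cal M})$, not $1/4$; to drive the per-sync error down to $O\bigl(1/t(\bar{\cal M})\bigr)$ you would need epochs of length about $\max_i t({\cal M}_i)\log t(\bar{\cal M})$, which destroys the claimed bound $2\,t(\bar{\cal M})\max_i t({\cal M}_i)$. The clean constant $2$ in the statement reflects its spectral origin: Martin and Randall prove $\mathrm{Gap}({\cal M})\ge \tfrac{1}{2}\,\mathrm{Gap}(\bar{\cal M})\min_i \mathrm{Gap}({\cal M}_i)$, and the mixing-time form quoted in the paper is derived from that gap inequality rather than from an epoch-by-epoch simulation. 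If you want to salvage your approach you must either accept logarithmic losses (and then you are proving a weaker statement) or abandon the coupling in favor of the variance/Dirichlet-form decomposition.
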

 To apply the decomposition theorem, we partition $S_n$ to $S_{\sigma_1,\sigma_2, \dots, \sigma_k} $ for all choices of $ \sigma_1\in S_{n_1},
\sigma_2\in S_{n_2},\dots , \sigma_k\in S_{n_k}$, each $S_{\sigma_1,\sigma_2, \dots, \sigma_k}$ being the set of all permutations in
$S_n$ in which all the gladiators corresponding to particle $i$ preserve the ordering associated to them by $\sigma_i$.
The restriction of $\G_k(n)$ to $S_{\sigma_1,\sigma_2, \dots, \sigma_k}$ is equivalent to $\EX_{k}(n)$.
We define $\bar{\G}$ to be the Markov chain on $\prod_{i=1}^k S_{n_i}$ with the following transition probabilities: 

$$Pr_{\bar{\G}}(S_{\sigma_1,\sigma_2, \dots,\sigma_i,\dots, \sigma_k},S_{\sigma_1,\sigma_2, \dots,\sigma'_i,\dots, \sigma_k})={ \displaystyle \sum_{\atop{\substack{
x\in S_{\sigma_1,\sigma_2, \dots,\sigma_i,\dots, \sigma_k},\\y\in S_{\sigma_1,\sigma_2, \dots,\sigma'_i,\dots, \sigma_k}}}
} \frac{\pi(x)Pr_{\G}(x,y)}{{\pi(S_{\sigma_1,\sigma_2, \dots,\sigma_i,\dots, \sigma_k})}}},$$
where  $\sigma_i$ and $\sigma_i'$ are only different in swapping  $j$ and $j+1$st elements and
$Pr_{\G}(x,y)=1/2(n-1)$ iff $j$ and $j{+}1$st copies of particle $i$ are adjacent in $x$ and swapped in $y$. 
Moreover, we observe that:
 $$\frac{1}{{\pi(S_{\sigma_1,\sigma_2, \dots,\sigma_i,\dots, \sigma_k})}}{ \displaystyle \sum_{\atop{\substack{
x\in S_{\sigma_1,\sigma_2, \dots,\sigma_i,\dots, \sigma_k},\\y\in S_{\sigma_1,\sigma_2, \dots,\sigma'_i,\dots, \sigma_k}}}
} \pi(x)} \geq1/(n-1).$$
We can verify the above equation by the following reasoning: consider an arbitrary permutation
$z\in S_{\sigma_1,\sigma_2, \dots,\sigma_i,\dots, \sigma_k}$ in which $j$th and $j{+}1$st copies of particle $i$ are not adjacent.
We can map $z$ to two other permutations $z_1$ and $z_2$ where in $z_1$ we take the the $j$th copy of particle $i$ down to make it
adjacent to the $j{+}1$st copy, and in  $z_2$ we take the the $j{+}1$st copy of particle $i$ up to make it adjacent to the $j$th copy.
We will have $\pi(z)/\pi(z_1)=\pi(z_2)/\pi{(z)}$, and hence one of $\pi(z_1)$ or $\pi(z_2)$ will be larger than $\pi(z)$.
This mapping is in worst case $n{-}1$ to $1$, hence the above equation holds.  

Having the above observations, we realize $\bar{\G}$ is the product of $k$ adjacent transposition Markov chains, and in each of these
Markov chains we swap two adjacent elements with probability at least $1/2(n-1)^2$. Let these chains be $\bar{\G}_1,\bar{\G}_2,\dots, \bar{\G}_k$.
By comparing the conductance (for more information about conductance, see \cite{MCBook}) of this chain to the simple chain analyzed
by Wilson \cite{Wilson}, for each $i$ we will have $t(\bar{\G}_i)\leq {n}_i^8$. 
We use the following Theorem of \cite{Dana}:

\begin{thm}\label{product}
If  $\bar{\G}$ is a product of $k$ independent Markov chains $\{\bar{\G}_i\}^{k}_{i=1}$
and it updates each  ${\bar{\G}_i}$ with probability $p_i$, then
$$t_{\epsilon}(\bar{\G})\leq \displaystyle \max_{i=1,\dots, n} \frac{2}{p_i}~ t_{\frac{\epsilon}{2k}}(\bar{\G}_i).$$
\end{thm}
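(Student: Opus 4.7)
The plan would combine two standard ingredients: the subadditivity of total variation distance over product measures, and a concentration bound on the number of updates each component receives. I would begin by observing that when $\bar{\G}$ is run for $T$ steps from an initial state $(x_1,\ldots,x_k)$, its distribution factors as
\[
\mu^T(y_1,\ldots,y_k) \;=\; \mathbb{E}\Bigl[\prod_{i=1}^k P_i^{N_i}(x_i, y_i)\Bigr],
\]
where $P_i$ is the transition kernel of $\bar{\G}_i$ and $(N_1,\ldots,N_k) \sim \mathrm{Multinomial}(T; p_1,\ldots,p_k)$ records the number of updates received by each coordinate. Conditional on $(N_1,\ldots,N_k)$ the coordinates evolve independently, so by the standard inequality $\|\mu_1\otimes\cdots\otimes\mu_k - \pi_1\otimes\cdots\otimes\pi_k\|_{TV}\leq \sum_i\|\mu_i-\pi_i\|_{TV}$ it would suffice to bound each marginal deviation by $\epsilon/k$.

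Next, I would set $\tau_i := t_{\epsilon/(2k)}(\bar{\G}_i)$ and $T:=\max_i(2/p_i)\tau_i$, so that $\mathbb{E}[N_i] = Tp_i \geq 2\tau_i$ for each $i$. Using the monotonicity of TV distance in the number of steps, I would split the marginal bound according to whether coordinate $i$ has received enough updates:
\[
\|\mu_i^T-\pi_i\|_{TV} \;\leq\; \Pr[N_i \geq \tau_i]\cdot\tfrac{\epsilon}{2k} \;+\; \Pr[N_i<\tau_i],
\]
and a Chernoff bound on $N_i\sim \mathrm{Bin}(T,p_i)$ at roughly half its mean would give $\Pr[N_i<\tau_i]\leq \epsilon/(2k)$, yielding $\|\mu_i^T-\pi_i\|_{TV}\leq \epsilon/k$. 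Summing over $i$ would produce the claimed $\epsilon$ bound on the product chain.

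The main technical obstacle is making the concentration tight enough to fit inside the factor $2$ appearing in $2/p_i$ without introducing additional terms. For asymptotically large $\tau_i$ this is immediate from Chernoff; in the borderline regime where some $\tau_i$ is small, I would either absorb a constant factor into the final bound or argue in two stages, first showing that each chain receives at least a constant fraction of its expected number of updates with high probability, then invoking monotonicity of TV distance to conclude. Since this is a standard product-chain lemma, I expect the tightening of constants to be routine once the decomposition above is in place.
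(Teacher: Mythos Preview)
The paper does not prove this theorem. It is stated as ``We use the following Theorem of \cite{Dana}'' and quoted verbatim from Bhakta, Miracle, Randall, and Streib; no argument is supplied here. There is therefore no proof in the present paper to compare your proposal against.

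That said, your outline is the standard product-chain argument and is essentially what one finds in \cite{Dana}: condition on the multinomial vector of update counts, use subadditivity of total variation over independent coordinates, and control the event that some coordinate is under-sampled. The one place where your sketch is honest about a gap is the constant. With $T=\max_i (2/p_i)\,t_{\epsilon/(2k)}(\bar\G_i)$ you get $\mathbb{E}[N_i]\ge 2\tau_i$ and Chernoff yields $\Pr[N_i<\tau_i]\le e^{-\tau_i/4}$; but nothing in the hypotheses forces $\tau_i\ge 4\ln(2k/\epsilon)$, so the tail term need not be below $\epsilon/(2k)$ in general. In the applications in this paper the $\tau_i$ are polynomial in $n$ while $k$ is constant, so this is harmless, and the exact constant $2$ is irrelevant to any of the conclusions drawn here; but as a free-standing statement with that precise constant your Chernoff step would need an auxiliary assumption (e.g.\ $\tau_i$ bounded below) or a slight loosening of the factor $2$.
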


Plugging in $p_i=n_i/n$, we have $t(\bar{\G})\leq \max( 2n/n_i) n_i^8\leq 2 n^8$. Summing up and employing the Decomposition Theorem,

$$t(\G_{k}(n))\leq 4 n^8 t(\EX_{k}(n)).$$

\section{ Three-Particle Systems (Proof of the Main Theorem)}\label{proof}

In this section we prove Theorem \ref{mainthm} which states that $t(\EX_3(n))\leq {\cal O}(n^{10}) $ if 
$s_A/s_B, s_B/s_C\leq 1/2$.

Assume that we have $a$ copies of particle $A$, $b$ copies of particle $B$, and $c$ copies of particle $C$.
We denote the set containing all the different arrangements of these particles by $\Omega_{a,b,c}$.
We introduce another Markov chain $\EX_t(n)$ on the same sample space $\Omega_{a,b,c}$.
Using the comparison method (see \cite{compare}) we will show that the mixing times of $\EX_3(n)$ and $\EX_t(n)$ are related.

Then we will use the path congestion technique to show $\EX_t(n)$ mixes in polynomial time, and hence we deduce Theorem \ref{mainthm}.

$$\text{mixing time of } \EX_3(n) \xleftarrow[\text{ technique}]{\text{Comparison}}\text{mixing time of } \EX_t(n) $$
\textbf{Notation.} We denote the substring $\sigma(i)\sigma(i+1)\dots\sigma(j)$ by $\sigma[i,j]$, and by $B^t$ we refer
to a string which is $t$ copies of particle $B$. 

\begin{defin}
Let  ${\EX}_t (n)$ be a Markov chain on state space $\Omega_{a,b,c}$ and $n=a+b+c$. If the current state is $\sigma$,
we choose natural numbers $1\leq i<j\leq n$ uniformly at random and swap them following these rules (Figure \ref{figJH}):
\end{defin}

\begin{enumerate}
\item If $\sigma(i)=A$  and in $\sigma(j)=C$ or vice versa and $\sigma[i{+}1,j{-}1]=B^{j-i-1}$.  Then,
put $\sigma(i)$ and $\sigma(j)$  in increasing order of their strength w.p.\ $(s_C/s_A)^{(j{-}i)}/(1+(s_{C}/s_A)^{(j{-}i)})$.
With probability $1/(1+(s_{C}/s_A)^{(j{-}i)})$, put them in decreasing order. We call this move a \emph{Jump} and we denote
it by $\J_{i}^{j}(A,C)$ if $\sigma(i)=A$ and $\sigma(j)=C$; and $\J_{i}^{j}(C,A)$ for vice versa. 

\item If $\sigma[i,j{-}1]=B^{j-i}$ and  $\sigma(j)=A $   or  if $\sigma[i+1,j]=B^{j-i}$ and  $\sigma(i)=A $.
Then, put $\sigma(i)$ and $\sigma(j)$ in increasing order of their strength w.p. $(s_{B}/s_A)^{j{-}i}/(1+(s_{B}/s_A)^{j{-}i})$.
With probability $1/(1+(s_{B}/s_A)^{j{-}i})$, put them in decreasing order. We call this move a \emph{Hop},
and we denote it by ${\Ho}_{i}^j(A,B)$  if $\sigma(i)=A$  and $\sigma(j)=B$; and ${\Ho}_{i}^{j}(B,A)$ for vice versa.
Similar rules and notation apply when swapping $B$ and $C$. 
\item Else, do nothing.
\end{enumerate}

\begin{figure}[!h]

\centerline{\resizebox{0.7 \linewidth}{!}{\includegraphics{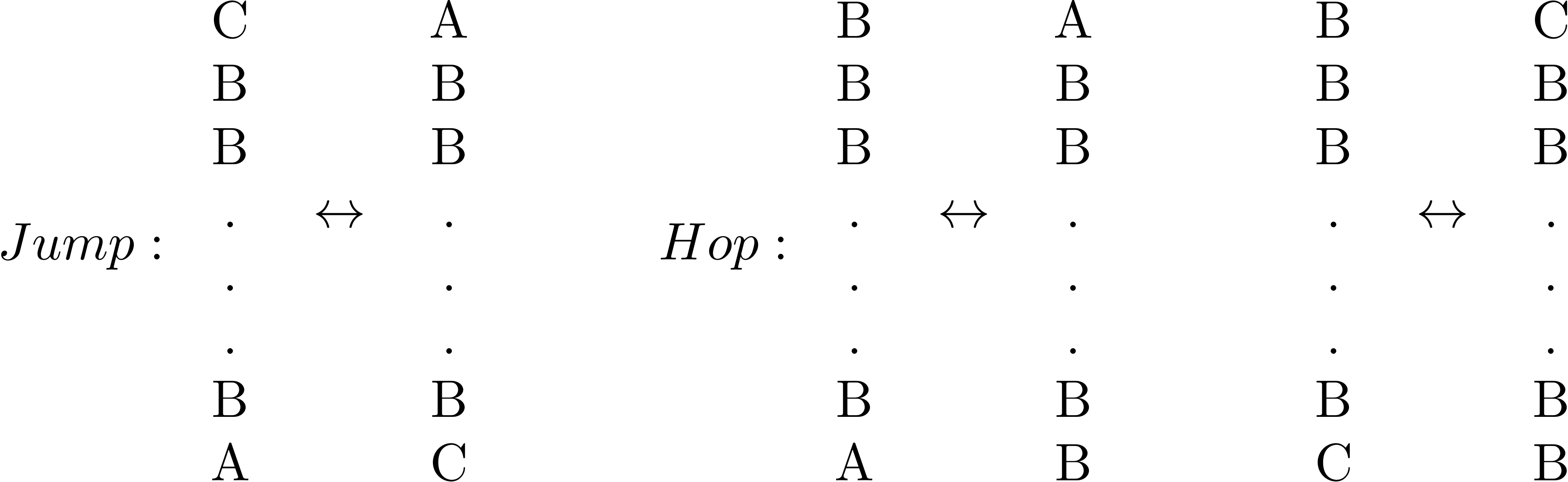}}}
\caption{Jumps and Hops are the transitions  in the Markov chain $\EX_t$.}\label{figJH}
\end{figure}


It can be easily checked that $\EX_t$ is reversible and its stationary distribution is the $\pi$ in Equation \ref{eq2}.

\begin{lemma}\label{lem1}
$t(\EX_3(n))\leq 2n^4 ~ t(\EX_t(n)).$
\end{lemma}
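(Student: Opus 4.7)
The plan is to apply the Diaconis--Saloff-Coste comparison method~\cite{compare}: since $\EX_t$ and $\EX_3$ share the state space $\Omega_{a,b,c}$ and the stationary distribution $\pi$, I will simulate every Jump or Hop transition of $\EX_t$ by a short path of $\EX_3$-transitions, then bound the resulting edge congestion.

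For a Jump $\J_i^j(A,C)$ sending $\sigma=\cdots AB^{j-i-1}C\cdots$ to $\sigma'=\cdots CB^{j-i-1}A\cdots$, the canonical path first shifts the $A$ rightward past each intervening $B$ (one $(A,B)\to(B,A)$ adjacent swap at a time), then performs the single $(A,C)\to(C,A)$ swap, then shifts the $C$ leftward past each of those $B$'s; the total length is $2(j-i)-1\le 2n-1$. A Hop $\Ho_i^j(X,Y)$ is handled analogously, moving the single non-neutral particle across the intervening pure run in at most $j-i\le n$ adjacent swaps. By design, every swap on such a path moves a weaker particle to a higher index, so (i) the stationary weight decreases monotonically from $\pi(\sigma)$ to $\pi(\sigma')$, giving $\pi(\tau)\ge\min(\pi(\sigma),\pi(\sigma'))$ for every intermediate $\tau$, and (ii) the swap is in the ``favorable'' direction, so its $\EX_3$ transition probability is at least $\tfrac{2}{3(n-1)}$ under the hypothesis $s_A/s_B,\,s_B/s_C\le 1/2$ (the stronger-over-total ratio is then at least $2/3$). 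These combine to give $Q_3(e)\ge\tfrac{2}{3(n-1)}\min(\pi(\sigma),\pi(\sigma'))$, while the Jump/Hop definitions give $Q_t(e')\le\min(\pi(\sigma),\pi(\sigma'))/\binom{n}{2}$, so the per-path flow ratio $Q_t(e')/Q_3(e)$ is $O(1/n)$.

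It then remains to count the $\EX_t$-transitions whose canonical path traverses a fixed $\EX_3$-edge $e=(\tau,\tau')$. The particle-type pair on $e$ already pins down which phase of which family of Jumps/Hops could produce it, and $\tau$ then forces the ``far'' endpoint of the simulated transition (the nearest particle of the appropriate type reachable from the moving particle through a pure $B$- or $C$-run), while the ``near'' endpoint can range over at most $n$ positions. Hence at most $O(n)$ transitions $e'$ use $e$, and combining with path length $O(n)$ and the $O(1/n)$ flow ratio yields the congestion
\[
A=\max_e\frac{1}{Q_3(e)}\sum_{e'\ni e}|\gamma_{e'}|\,Q_t(e')=O(n).
\]
Passing from this spectral-gap comparison to a mixing-time comparison costs a further $\log(1/\pi_{\min})=\mathrm{poly}(n)$ factor, which fits comfortably inside the $2n^4$ bound of the lemma.

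The main obstacle is the bookkeeping needed to guarantee that every step of the canonical path is in the favorable direction, so that both the $\pi$-monotonicity along the path and the $\ge 2/3$ swap-probability bound hold uniformly; a less careful choice (say, shifting $C$ left before $A$ right) would visit intermediate states $\tau$ with $\pi(\tau)<\pi(\sigma')$ and destroy the clean lower bound on $Q_3(e)$. The uniqueness-of-far-endpoint claim used in the counting step has to be checked separately for the three edge types $(A,B)$, $(A,C)$, $(B,C)$, but in each case it follows directly from the purity-of-run condition built into the Jump and Hop definitions.
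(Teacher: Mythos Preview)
Your approach is essentially the paper's: same comparison technique, same canonical paths (shift $A$ right through the $B$-run, do the $(A,C)$ swap, shift $C$ left; for Hops, just slide the odd particle across the run). However, two of your intermediate claims are incorrect as stated, even though the lemma still follows once they are fixed.

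First, claim (ii) is backwards. Moving a weaker particle to a higher index is the \emph{unfavourable} direction: when $\tau(k)=A$, $\tau(k{+}1)=B$ with $s_A<s_B$, the probability of the swap $\tau\to\tau'$ is $\frac{1}{n-1}\cdot\frac{s_A}{s_A+s_B}\le\frac{1}{3(n-1)}$, not $\ge\frac{2}{3(n-1)}$. What is true (and is what you actually need) is the reversible quantity
\[
Q_3(e)=\pi(\tau)P_3(\tau,\tau')=\pi(\tau')\cdot\frac{1}{n-1}\cdot\frac{s_B}{s_A+s_B}\ge\frac{2}{3(n-1)}\,\pi(\tau')\ge\frac{2}{3(n-1)}\,\min\bigl(\pi(\sigma),\pi(\sigma')\bigr),
\]
where the last inequality uses your monotonicity (i). So the bound you quote survives, but not for the reason you give.

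Second, your counting claim that the ``far'' endpoint is forced is false. Fix an $\EX_3$-edge $e$ swapping $(A,B)$ at positions $(k,k{+}1)$, and suppose $\tau$ looks like $p\,B^{t}AB^{d}\,p'$ around position $k$ with $p,p'\neq B$. A Hop $\Ho_j^{k'}(A,B)$ that uses $e$ only needs $j\in[k{-}t,k]$ and $k'\in[k{+}1,k{+}d]$; the destination $k'$ is \emph{not} forced to be the end of the $B$-run. Hence $\Theta(td)$ Hops (and similarly many Jumps) can use $e$, which is $O(n^2)$ in the worst case, not $O(n)$. The paper's computation reflects exactly this double sum over both endpoints. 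With $A=O(n^2)$ and $\log(1/\pi_{\min})=O(n^2)$ you still land inside $2n^4$, so the lemma is fine; but your argument for $A=O(n)$ does not stand as written. (If you want to recover $O(n)$ honestly, you would have to exploit the geometric decay of $Q_t(e')/Q_3(e)$ in the destination endpoint $k'$, which your uniform $O(1/n)$ bound throws away.)
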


\begin{proof}

We use the comparison technique\footnote{
The comparison method was introduced by Diaconis and Saloff-Coste \cite{ComparisonMethodDiaconis}; Randall and Tetali extended it and employed
it for analysis of Glauber dynamics \cite{compare}.  For more information about this method we encourage the reader to refer to \cite{MCBook}.
}
 in the proof of Lemma \ref{lem1} (see \cite{ComparisonMethodDiaconis, compare}). To any edge $(\sigma,\tau)$ in $\EX_t$, we assign a path
from $\sigma$ to $\tau$ in $\EX_3$.  Let $e_{i}(p,p')$ be a move in $\EX_3$ which swaps particles $p$ and $p'$ located at positions
$i$ and $i+1$ in an arrangement. To  $e= (\sigma,\tau)$ making $\J_{i}^j(A,C)$ in $\EX_{t}$, we correspond the following path in $\EX_3$:
$e_i(A,B) , e_{i+1}(A,B),\dots e_{j-2}(A,B), e_{j-1}(A,C), e_{j-2}(B,C),\dots e_{i}(B,C)  $. We denote this path by $\gamma_{\sigma\tau}$,
and the set contaning all such paths by  $\Gamma_{\J}$. 
Similarly, to  $e= (\sigma,\tau)$ making $\Ho_{i}^j(A,B)$ in $\EX_{t}$, we correspond the following path in $\EX_3$:
$e_i(A,B) , e_{i+1}(A,B),\dots e_{j-2}(A,B), e_{j-1}(A,B) $. We denote this path by $\gamma_{\sigma,\tau}$, and the set contaning all such
paths by $\Gamma_{\Ho}$. Let $\Gamma=\{ \gamma_{\sigma,\tau}\}_{\sigma,\tau\in \Omega_{a,b,c}}=\Gamma_{\J}\cup \Gamma_{\Ho}$.

We now bound the congestion placed by $\Gamma$ on edges of $\EX_3$.
Consider an arbitrary $e=(\alpha,\beta)$  making swap $e_i(A,B)$ and assume $\alpha[i-t-1,i+d+1]=pB^tAB^{d}p'$ where $p$ and
$p'$ are particles different from $B$. For any $\sigma$ and $\tau$ in $\Omega_{a,b,c}$ if $e\in \gamma_{\sigma,\tau}$ then,
there must be  $i-t\leq j\leq i-1$ and $i+1\leq k\leq i+d$ such that $\gamma_{\sigma \tau}$ corresponds to $\Ho_{j}^k(A,B)$ or
to $\J_{j}^{i+d+1}(A,p')$. Thus, the congestion placed on $e$ only by paths in $\Gamma_{\Ho}$ is:

$$
  \frac{\sum_{\{\sigma,\tau\vert e \in \gamma_{\sigma\tau}\in \Gamma_{\Ho}\}}
 \vert\gamma_{\sigma,\tau}\vert  {\cal C}(\sigma,\tau)}{{\cal C}(e)}=\sum_{j=i-t}^{i-1}\sum_{k=i+1}^{i+d} \frac{\vert\gamma_{\sigma,\tau}\vert (s_B/s_A)^{i-j} (1+s_B/s_A) }{(1+ (s_B/s_A)^{k+1-j})}
$$
$$ 
\begin{array}{ll}
\hspace{5cm}& \leq 2(d{+}t) \sum_{j'=1}^{t} \sum_{k'=1}^{d}\frac{(s_B/s_A)^{j'} (s_B/s_A) }{(1+(s_B/s_A)^{j'+k'})}\\
&\leq 2t(d{+}t) \sum_{k'=1}^{d} \frac{{s_B}/{s_A}}{({s_B}/{s_A})^{k'}}\leq n^2.
\end{array}
$$

We can similarly show that the congestion placed on $e$ by $\Gamma_{\J}$ is less than $n^2$, where $n $ is the length of the
arrangements or total number of particles. For each $e\in E({\cal M})$, let ${\cal A}_e$ be the congestion $\Gamma$ places on $e$.
We will have ${\cal A}_e\leq 2n^2$.  We also know $\pi_{min}\geq q^{n(n+1)}$, where $q=\max\{s_A/s_B,s_B/s_C\}$ (we consider $q$
to be a constant). Hence, employing the Comparison Theorem we have

$$
t(\EX_3(n)) \leq   \max_{e\in E({\cal M})}{\cal A}_e \pi_{min} t(\EX_t(n))
\leq (2n^4 ) t(\EX_t(n)).
$$
\end{proof}

Having the above connection it suffices to bound the mixing time of $\EX_t(n)$. In the rest of this section our goal is to bound
$t(\EX_t)$. We use the path congestion theorem, which is stated below.  In particular,
for any two arbitrary states $\sigma,\tau \in \Omega_{a,b,c}$,  we introduce a path $\gamma_{\sigma,\tau}$.
Then we show that none of the edges of the Markov chain $\EX_t(n)$ is congested heavily by these paths.
Formally, we employ Theorem \ref{canonicalPaths} and in Theorem \ref{big} we show that $t\left(\EX_t(n)\right) \leq{\cal O}(n^4)$.

\newpage
\begin{thm}(Canonical Paths Theorem \cite{Permanent})\label{canonicalPaths}

Let $\cal M$ be a Markov chain with stationary distribution $\pi$ and $E$  the set of the edges in its underlying graph. 
For any two states $\sigma$ and $\tau$ in the state space $\Omega$ we define a path $\gamma_{\sigma,\tau}$.
The congestion factor for any edge $e\in E$ is denoted by $\Phi_e$ and is defined by
 $
\Phi_e=\frac{1}{C(e)} \sum_{\substack{x,y\\ e\in \gamma_{x,y}} }\pi(x) \pi(y)
$. We can bound the mixing time of $\cal M$ using the congestion factor:
$
t_{\epsilon}({\cal M})\leq 8 \Phi^2 (\log \pi_{min}^{-1} + \log \epsilon ),
$
where $\Phi=\max_{e\in E} \phi_{e}$, $\pi_{min}=\min_{x\in \Omega}\pi(x)$ and $\epsilon$ is the convergence parameter. 
\end{thm}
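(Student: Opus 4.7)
The plan is to derive the mixing-time bound by threading the congestion factor $\Phi$ through two classical inequalities: a canonical-paths lower bound on (edge) conductance, and Cheeger's inequality from conductance to spectral gap. The result then follows from the standard spectral-gap-to-total-variation mixing-time bound.

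The first step is to show that the conductance $h({\cal M}) := \min_{\emptyset \neq S,\, \pi(S) \leq 1/2} Q(S, S^c)/\pi(S)$ satisfies $h({\cal M}) \geq 1/(2\Phi)$, where $Q(S,S^c) = \sum_{x\in S, y\in S^c} \pi(x) P(x,y) = \sum_{e \in \partial S} C(e)$. Fix a cut $S$ with $\pi(S) \leq 1/2$. Every canonical path $\gamma_{x,y}$ from some $x \in S$ to some $y \in S^c$ must traverse at least one edge of $\partial S$, so
\[
\pi(S)\,\pi(S^c) \;=\; \sum_{x\in S,\, y\in S^c} \pi(x)\pi(y) \;\leq\; \sum_{e \in \partial S} \ \sum_{\gamma_{x,y}\ni e} \pi(x)\pi(y) \;\leq\; \Phi \sum_{e\in \partial S} C(e) \;=\; \Phi \cdot Q(S,S^c).
\]
Dividing by $\pi(S)$ and using $\pi(S^c) \geq 1/2$ yields the claim.

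The second step is Cheeger's inequality for reversible Markov chains: the second-largest eigenvalue $\lambda_2$ of the transition kernel satisfies $1 - \lambda_2 \geq h({\cal M})^2/2$. Assuming a laziness assumption (or absorbing a constant factor for non-lazy chains after treating the most negative eigenvalue by standard tricks), the spectral gap $\gamma^* = 1 - \max(\lambda_2, |\lambda_{\min}|)$ inherits the same lower bound. Combined with the first step this gives $\gamma^* \geq 1/(8\Phi^2)$.

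The third step is the standard bound $t_\epsilon({\cal M}) \leq (1/\gamma^*)(\log \pi_{min}^{-1} + \log \epsilon^{-1})$, a direct consequence of the $L^2$ contraction estimate $\|P^t(x,\cdot) - \pi\|_{TV} \leq \tfrac{1}{2}\pi_{min}^{-1/2}(1-\gamma^*)^t$ under reversibility. Plugging in $\gamma^* \geq 1/(8\Phi^2)$ delivers $t_\epsilon({\cal M}) \leq 8\Phi^2(\log \pi_{min}^{-1} + \log \epsilon^{-1})$. The main obstacle is Cheeger's inequality, whose proof requires constructing an effective sublevel-set cut from the second eigenvector and is the one genuinely nontrivial analytic ingredient; the canonical-paths-to-conductance argument is a direct unwinding of the definitions, and the spectral-gap-to-mixing-time step is routine. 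Minor care is needed to handle the smallest eigenvalue for non-lazy chains, but this only affects the constant.
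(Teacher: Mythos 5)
Your proposal is correct and takes essentially the route of the cited source: the paper gives no proof of this theorem, deferring to \cite{Permanent}, and the classical Jerrum--Sinclair argument is exactly your chain of inequalities --- canonical paths give conductance at least $1/(2\Phi)$, Cheeger's inequality gives spectral gap at least $1/(8\Phi^2)$, and the relaxation-time bound yields $t_{\epsilon}\leq 8\Phi^2\left(\log \pi_{min}^{-1}+\log \epsilon^{-1}\right)$, which also explains the square and the constant $8$ in the statement (and fixes the paper's typo $\log\epsilon$ to $\log\epsilon^{-1}$). The only loose end, which you explicitly flag, is the handling of the most negative eigenvalue (laziness), and that matches the level of care in standard statements of this theorem.
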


\textbf{The Paths.}
For each $\sigma,\tau\in \Omega_{a,b,c}$, we introduce the following path in $\EX_t$ from $\sigma$ to $\tau$: We partition $\sigma$
and $\tau$ to $b+1$ blocks; the end points of these blocks are locations of $B$s in $\tau$. For instance if in $\tau$, the first $B$
is located at position $i$ and the second $B$ is located at position $j$ then, the first block in both $\sigma$ and $\tau$ is $[1,i]$,
and the second is $[i+1,j]$. Starting from the first block, we change each block in two steps, first we use Jump moves and change
the relative position of $A$ and $C$s in $\sigma$ to become in the order in which they appear in $\tau$. Then, we bring the $B$
in that block to its location in $\tau$. Formally, we repeat the following loop:
\medskip

\textbf{Notation.} By saying $k=B_j(\sigma)$, we mean the $j$th copy of particle $B$ is located at position $k$ in $\sigma$. 

\noindent 
Starting from $\sigma$, we repeat the following steps until $\tau$ is reached.

Initially, let $i,j=1$. 
\begin{enumerate}
\item Let $k=B_j(\tau)$. We define the  $j$th \emph{block} of $\sigma$ and $\tau$ to be the substring starting from $i$ and ending in $k$.
Note that in $\tau$, each blocks starts right after a $B$ and ends with a $B$. In the $j$th iteration, the goal is to change
$\sigma[i,k]$ until $\sigma[1,k]=\tau[1,k]$, i.e. the first $j$ blocks equal in $\sigma$ and $\tau$.

\item Using Jumps, and starting from the lowest index $i$, we bring particles $C$ or $A$ down until $A$ and $C$ particles in the block
$[i,k]$ have the same order in $\sigma$ and $\tau$.
\item We use Hops and bring the $j$th $B$ in $\sigma$ to $B_j(\tau)$. In this process, we may need to bring several copies of particle
$B$ out of the $j$th block in $\sigma$. In that case, we choose a random ordering of $B$s and move them with respect to that order
(details explained in the proof of Claim \ref{PathClaim}).
\item Set $i=B_j(\tau)+1$.
\item Increment $j$.
\end{enumerate}

\begin{figure}[h]
\noindent
\begin{tiny}

\begin{tabular}{ccccccccr}
\mbox{1st iteration:} &\quad&\quad&\quad\quad\quad\quad\quad\quad\quad\quad\quad\quad\quad\quad\quad\quad\quad\quad\quad&\mbox{2nd iteration:}&\quad&3rd iteration:\\
\end{tabular}

\hspace*{-0.3cm}
\begin{tabular}{rcccccccccccccc}
C&&C&& C&&C&&C&&C&&\textcolor{red}{C}&&A\\
$\sigma:\quad$A& &A&& A&&A&&A&&{A}&&\textcolor{red}{A}&$\quad\quad\tau:$&C\\
B&&B& &B&&B&&B&&\textcolor{red}{B}&&{C}&&C\\
C&&C&&\textcolor{red}{ C}&&C&&C&&\textcolor{red}{C}&&B&&B\\
A&\mbox{\tiny Jump}&A&\mbox{\tiny Jump}&\textcolor{red}{ A}&\mbox{\tiny Jump}&A&\mbox{\tiny Hop}&\textcolor{red}{A}&\mbox{\tiny Hop}&B&\mbox{\tiny Hop}&B&\mbox{\tiny Jump}&B\\
{C}&$\longrightarrow$&\textcolor{red}{C}&$\longrightarrow$& A&$\longrightarrow$&\textcolor{red}{A}&$\longrightarrow$&\textcolor{red}{B}&$\longrightarrow$&A&$\longrightarrow$&A&$\longrightarrow$&A\\
B&\mbox{\tiny Step 1}&B&\mbox{\tiny Step 1}& B&\mbox{\tiny Step 1}&\textcolor{red}{B}&\mbox{\tiny Step 2}&A&\mbox{\tiny Step 2}&A&\mbox{ \tiny Step 2}&A&\mbox{\tiny Step 1}&A\\
\textcolor{red}{C}&& \textcolor{red}{A}&&C&&C&&C&&C&&C&&C\\
\textcolor{red}{A}&& C&&C &&C&&C&&C&&C&&C 

\end{tabular}
\caption{We use the path congestion technique to bound $t(\EX_t)$. In each iteration we fix a block in $\sigma$ until $\tau$ is reached.}
\end{tiny}
\end{figure}


\begin{claim}\label{PathClaim}
 Let $\{\gamma_{\sigma,\tau}\}_{\sigma,\tau\in\Omega_{a,b,c}}$ be the set of paths defined as above. Then, for any arbitrary edge $e$ in
the Markov chain $\EX_t$ the congestion $\Phi_e$, defined in Theorem \ref{canonicalPaths} satisfies $\Phi_e\leq n$.

\end{claim}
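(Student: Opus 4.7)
The plan is to apply Sinclair's flow-encoding method. By definition,
\[
\Phi_e=\frac{1}{\pi(\alpha)P(\alpha,\beta)}\sum_{(\sigma,\tau):\,e\in\gamma_{\sigma,\tau}}\pi(\sigma)\pi(\tau),
\]
so it suffices to construct, for each candidate iteration index $j\in\{1,\dots,b{+}1\}$ at which the flow from $\sigma$ to $\tau$ could use $e$, an injection
\[
\eta_{e,j}\colon\{(\sigma,\tau)\text{ using }e\text{ at iteration }j\}\to \Omega_{a,b,c},
\]
together with a pointwise Boltzmann comparison $\pi(\sigma)\pi(\tau)\le \pi(\alpha)\,P(\alpha,\beta)\,\pi(\eta_{e,j}(\sigma,\tau))$. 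Summing this inequality over $(\sigma,\tau)$ and using $\sum_{\eta\in\Omega_{a,b,c}}\pi(\eta)\le 1$ contributes $1$ for each value of $j$; summing over the $b{+}1\le n$ possible iterations then yields $\Phi_e\le n$.

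First I would describe $\eta_{e,j}(\sigma,\tau)$ as a splice of ``past'' and ``future.'' By the time iteration $j$ starts, the first $j{-}1$ blocks of $\alpha$ already match $\tau$, while the suffix beyond position $B_j(\tau)$ still mirrors the corresponding suffix of $\sigma$ up to the $B$-exchanges performed so far. I would define $\eta$ to agree with $\sigma$ on positions $1,\dots,B_{j-1}(\tau)$ (the pre-iteration history that $\alpha$ has erased), with $\tau$ on positions $B_j(\tau){+}1,\dots,n$ (recording where the flow is headed), and on the intermediate block $j$ to be filled in canonically from the Jump or Hop of $e$ itself (which specifies the two or three particles involved at the instant of the swap). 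Injectivity is then routine: from $(\alpha,\beta,\eta,j)$ we recover the pre-$j$ portion of $\sigma$ from $\eta$'s prefix, the frozen prefix of $\tau$ from $\alpha$, the post-$j$ portion of $\tau$ from $\eta$'s suffix, and the still-to-be-processed suffix of $\sigma$ from $\alpha$ together with the recorded random $B$-ordering chosen in step~3.

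The main obstacle is the Boltzmann comparison. Using the explicit form $\pi(\sigma)=\prod_i s(i)^{\sigma^{-1}(i)}/Z'$, the ratio $\pi(\sigma)\pi(\tau)/(\pi(\alpha)\pi(\eta))$ collapses into a telescoping product of strength ratios whose exponents are the residual displacements each non-$B$ particle must still undergo to reach its $\tau$-position. By design, the transition probability $P(\alpha,\beta)\propto (s_Y/s_X)^{j'-i}/(1+(s_Y/s_X)^{j'-i})$ of the Jump or Hop executed by $e$ absorbs exactly the strength-ratio factor attributable to the displacement across the current $B$-block, leaving an $O(1)$ residual $1+(s_Y/s_X)^{j'-i}$. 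The remaining subtlety is that step~3 produces a flow rather than a single path, through the random choice of $B$-orderings; here one averages over these orderings and appeals to Lemma~\ref{qlemma}, whose hypothesis $s_A/s_B,s_B/s_C<1/2$ from Theorem~\ref{mainthm} is exactly what bounds the resulting $q$-binomial sums by an absolute constant. Once these pieces are in place, the comparison constant reduces to the single factor $n$ coming from the unknown iteration index $j$, giving $\Phi_e\le n$ as claimed.
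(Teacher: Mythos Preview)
Your plan follows the same skeleton as the paper's argument---a Sinclair-style encoding that splices a prefix of $\sigma$ with a suffix of $\tau$---but it overstates what the encoding achieves and skips the step where the real work lies. The map $\eta_{e,j}$ you describe cannot be made an injection in the way you claim. For a Jump edge $e=(\alpha,\beta)$, the naive splice $\sigma[1,i{-}1]\mid\tau[i,n]$ need not even lie in $\Omega_{a,b,c}$: the $A/C$ content of the current block in $\sigma$ can differ from its content in $\tau$ (and hence from what $\alpha$ shows there), so the particle counts on the two sides of the splice are incompatible. The paper repairs this by a \emph{substitution} step, replacing the lowest few particles in the prefix to restore the correct type counts, but after substitution many $(\sigma,\tau)$ pairs collapse onto the same $\zeta$. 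Bounding the weighted sum over these preimages is exactly where Lemma~\ref{qlemma} enters, and it enters in a much heavier way than you indicate: it is used to control a sum over the number $t$ of substitutions (yielding the factor $\sum_t q^{t(t+1)-4t}\le n$ for Jump edges), not merely to average over the random $B$-ordering in Step~3. Your instruction to ``fill the intermediate block canonically from $e$'' does not resolve this, because $(\alpha,\beta)$ records the \emph{current} block content, not $\sigma$'s original content there, and that discrepancy is precisely the information lost.

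There is also a decoding issue in your injectivity sketch. Even with the iteration index $j$ in hand, the block endpoint $B_j(\tau)$ is not determined by $(\alpha,j)$---it is part of $\tau$, which you are trying to recover---so you cannot locate the splice point in $\eta$ to separate its $\sigma$-prefix from its $\tau$-suffix. In the paper's accounting the factor $n$ in $\Phi_e\le n$ for Jump edges arises from the substitution sum just mentioned, not from an unknown iteration index; for Hop edges the paper in fact gets $\Phi_e\le 1$. So while the overall shape of your plan is right, the two points above---loss of particle-count compatibility under splicing, and the resulting non-injectivity---are the crux of the argument, and your proposal does not yet engage with them.
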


We present a roadmap to the proof of the above claim before providing details.
\smallskip

In order to verify the claim, we analyze the congestion of  Jump and Hop edges separately. In both of the analyses, we consider an
edge $e=(\alpha,\beta)$, and to any $\sigma, \tau$ such that $e\in\gamma_{\sigma,\tau}$ we assign a ${\cal F}_e(\sigma,\tau)\in \Omega_{a,b,c}$.
The reverse image of $\cal F$ could be a subset of $\Omega_{a,b,c}\times \Omega_{a,b,c}$. However, using
$q-$binomials\footnote{More information about $q$-binomials can be found in Richard Stanley's course ``Topics in Algebraic Combinatorics,''
Chapter 6 (see \cite{Stanley}). }
we show that $\sum_{\sigma,\tau \text{ are mapped to the same } \zeta} \pi(\sigma)\pi(\tau)$ is bounded by a polynomial function of
$n$ multiplied by $\pi(\zeta)$, and then we conclude the claim. A key factor of our analysis is the use of $q$-binomials.
Note the following observations:
Assume that we have no copies of particle $A$, $b$ copies of $B$, and $c$ copies of particle $C$. Let $M\in \Omega_{0,b,c}$ be the
arrangement with maximum stationary probability, i.e. $M=B^bC^c$. Note that for each $\sigma\in \Omega_{0,b,c}$, $\pi(\sigma)/\pi(M)=(s_{B}/s_C)^t$,
where $t$ is the number of transpositions needed to get from $M$ to $\sigma$. For a constant $t$, the number of $\sigma$s requiring $t$
transpositions is equal to the number of integer partitions of $t$ fitting  in an $b\times c$ rectangle (see Figure \ref{qbin}). Thus:
$$\sum_{\sigma\in \Omega_{0,b,c}} \frac{\pi(\sigma)}{\pi(M)}={{b+c}\choose{b}}_{q}~; q=s_{B}/s_{C}.$$ 
\begin{figure}[!h]

$$\hspace{1.6cm}9=4+4+1\hspace{2.4cm}9=4+3+2\hspace{2.6cm}9=3+3+3$$
\[ {
\tiny
\begin{tabular}{c c}
$\tau_1:$\hspace{1cm}&\\
&{\tiny C}\\
&{\tiny B}\\
&{\tiny C}\\
&{\tiny C}\\
&{\tiny C}\\
&{\tiny B}\\
&{\tiny B}
\end{tabular}\hspace{0.5cm}
\begin{ytableau}
 *(yellow) &  & & \\
 *(yellow) &  *(yellow)  & *(yellow)    & *(yellow)  \\
 *(yellow) & *(yellow)   &  *(yellow) & *(yellow)  \\
\end{ytableau}
\begin{tabular}{c c}
$\hspace{1cm}\tau_2:$\hspace{1cm}&\\
&{\tiny C}\\
&{\tiny C}\\
&{\tiny B}\\
&{\tiny C}\\
&{\tiny B}\\
&{\tiny C}\\
&{\tiny B}
\end{tabular}\hspace{0.5cm}
\begin{ytableau}
 *(yellow) &  *(yellow)& & \\
 *(yellow) &  *(yellow)  & *(yellow)    &   \\
 *(yellow) & *(yellow)   &  *(yellow) & *(yellow)  \\
\end{ytableau}
\begin{tabular}{cc}
$\hspace{1cm}\tau_3$ :\hspace{1cm}&\\
&{\tiny C}\\
&{\tiny C}\\
&{\tiny C}\\
&{\tiny B}\\
&{\tiny B}\\
&{\tiny B}\\
&{\tiny C}
\end{tabular}\hspace{0.5cm}
 \begin{ytableau}
 *(yellow) &  *(yellow)& *(yellow)& \\
 *(yellow) &  *(yellow)  & *(yellow)    &   \\
 *(yellow) & *(yellow)   &  *(yellow) &   \\
\end{ytableau}
}
\]

\caption{Correspondence of partition functions with q-binomials: There are three integer partitions of 9 that fit into a 3$\times$4 rectangle,
and there are three arrangements of gladiators in $\Omega_{0,3,4}$ with $q(\tau_1)=q(\tau_2)=q(\tau_3)=q^9$. In other wors, the coefficient
of $q^9$ in ${{7}\choose{3}}_q$ equals 3.}\label{qbin}
\end{figure}
We will use the following lemma in our proof:
\begin{lemma}\label{qlemma}
 If $q<1/2$ then, $
{{m}\choose{r} }_q<\prod_{i=1}^{r} {1} /{(1-q)}< 2^r < (\frac{1}{q})^r$.
\end{lemma}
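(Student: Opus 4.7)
The plan is to unpack the definition of the $q$-binomial coefficient and bound each of the three inequalities by elementary manipulation — no clever step should be needed, since this is just a convenient uniform bound to be used later.

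First I would write out
\[
{m \choose r}_q \;=\; \prod_{i=1}^{r} \frac{1 - q^{m-i+1}}{1 - q^{i}} .
\]
For the leftmost inequality, each numerator factor $1 - q^{m-i+1}$ is at most $1$ (since $0 < q < 1$), so the product is bounded above by $\prod_{i=1}^{r} 1/(1-q^{i})$. Then I would observe that for $0 < q < 1$ and $i \geq 1$ one has $q^{i} \leq q$, hence $1 - q^{i} \geq 1 - q$, which gives $1/(1-q^{i}) \leq 1/(1-q)$ for each $i$. Multiplying these $r$ inequalities yields ${m \choose r}_q \leq \prod_{i=1}^{r} 1/(1-q) = 1/(1-q)^{r}$.

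For the middle inequality, the hypothesis $q < 1/2$ gives $1 - q > 1/2$, so $1/(1-q) < 2$, and raising to the $r$-th power gives $\prod_{i=1}^{r} 1/(1-q) < 2^{r}$. For the rightmost inequality, $q < 1/2$ gives $1/q > 2$, so $(1/q)^{r} > 2^{r}$. Chaining the three inequalities completes the proof.

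There is essentially no obstacle here; the only small care is to justify $1 - q^{i} \geq 1 - q$, which follows from the monotonicity of $t \mapsto q^{t}$ on $[1,\infty)$ for $q \in (0,1)$. The lemma is deliberately stated in a very loose form so it can be applied in Claim \ref{PathClaim} without worrying about sharper estimates on the $q$-binomials.
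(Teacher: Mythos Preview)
Your proof is correct and follows exactly the same approach as the paper's own proof: write out the product formula for the $q$-binomial, bound each numerator by $1$ and each denominator $1-q^{i}$ by $1-q$, then use $q<1/2$ for the remaining two inequalities. The paper's version is in fact terser than yours, simply displaying the product form and then asserting the chain of inequalities without spelling out the intermediate bounds you justified.
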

\begin{proof}
$$
{{m}\choose{r} }_q=\frac{(1-q^m)(1-q^{m-1})\dots (1-q^{m-r+1})}{(1-q)(1-q^2)\dots (1-q^r)}=\prod_{i=1}^{r} {(1-q^{m-i+1})} /{(1-q^i)}.
$$
$$
{{m}\choose{r} }_q<\prod_{i=1}^{r} {1} /{(1-q)}< 2^r < (\frac{1}{q})^r.
$$
\end{proof}

\begin{proof}[Proof of Claim~\ref{PathClaim}]

Consider an edge $e=(\alpha,\beta)$ corresponding to ${\J_{k}^{k+g}(C,A)}$. Assume that $k=C_{l}(\alpha)$, $k+d=A_m(\alpha)$
(remember the notation $k=p_{m}(\sigma)$ meaning the $m$th copy of particle $p$ is located at position $k$ in $\sigma$),
i.e., this edge is swapping the $l$th $C$ with the $m$th $A$ in $\alpha$.

It follows from the  way we set the paths that, for some $j$, $A_{m}(\alpha)\leq  j< A_{m{+}1}(\alpha), ~ A_{m}(\sigma)=j$,
and  for some $i , A_{m{-}1}(\beta) < i \leq A_{m}{(\beta)}, ~ A_{m}(\tau)=i$. 
The preceding blocks of $\alpha$ have been changed in accordance with $\tau$, and the succeeding blocks of $\alpha$ have
not been changed yet, hence they resemble $\sigma$ blocks.
Therefore we have $\alpha[1,i{-}1]=\tau[1,i{-}1]$ and $\alpha[j{+}1,n]=\sigma[j{+}1,n]$ (see Figure \ref{Move2fig}).

We define the function ${\cal F}_e:\Omega_{a,b,c}\times \Omega_{a,b,c}\rightarrow\Omega_{a,b,c}$ as follows: 
For any $\sigma,\tau$ satisfying $ e\in \gamma_{\sigma,\tau}$, let $\xi_{\sigma,\tau}:=\sigma[1,i-1]\vert\tau[i,n]$
(the symbol $\vert$ denotes concatenation). Since the arrangements of particles is changing, we may have
$\xi_{\sigma,\tau}\notin\Omega_{a,b,c}$. For instance we may have $\tau[i,n]\in \Omega_{x,y,z}$ and
$\sigma[1,i-1]\in \Omega_{x',y',z'}$ but $x+x\neq a$ or $y+y'\neq b$ or $z+z'\neq c$.
However, we know $a-(x+x')+(b-(y+y'))+(c-(z+z'))=0$, which means there is a way to substitute the particles
in $\sigma[1,i-1]$ to change $\xi$ to $\zeta$ so that $\zeta\in\Omega_{a,b,c}$.
We call this stage the substitution stage, in which we identify the particle or particles with extra copies in $\sigma[1,i-1]$,
and we substitute the lowest copies of them with inadequate particles and produce $\zeta\in \Omega_{a,b,c}$.
Then, we define ${\cal F}_{e}(\sigma,\tau):=\zeta$.
For instance, if  $a-(x+x')+(b-(y+y'))=-(c-(z+z'))$, then substitute the lowest $c-(z+z')$ copies of $A$ and
$B$ with $C$s, and produce ${\cal F}_{e}(\sigma,\tau)=\zeta$.
The substitution stage will cause a \emph{substitution cost}, we denote the substitution cost by $\emph{co}(\zeta)$,
and define it as: $\emph{co}(\zeta)=\pi(\zeta)/\pi(\xi)$, where  $\xi=\sigma[1,i-1]\vert\tau[i,n]$. Note that
if we make $t$ substitutions, the substitution cost is at most $(s_C/s_A)^t$.
To make the analysis simpler we only analyze the worst case in which we assume we have substituted $t$
$C$s with $A$s in $\sigma[1,i-1]$. This assumption also means that in $\sigma[i,j]$ we have $t$ more $A$s and $t$ fewer $C$s than in $\alpha[i,j]$.  

\begin{figure}[!ht]

\centerline{\resizebox{0.7\linewidth}{!}{\includegraphics{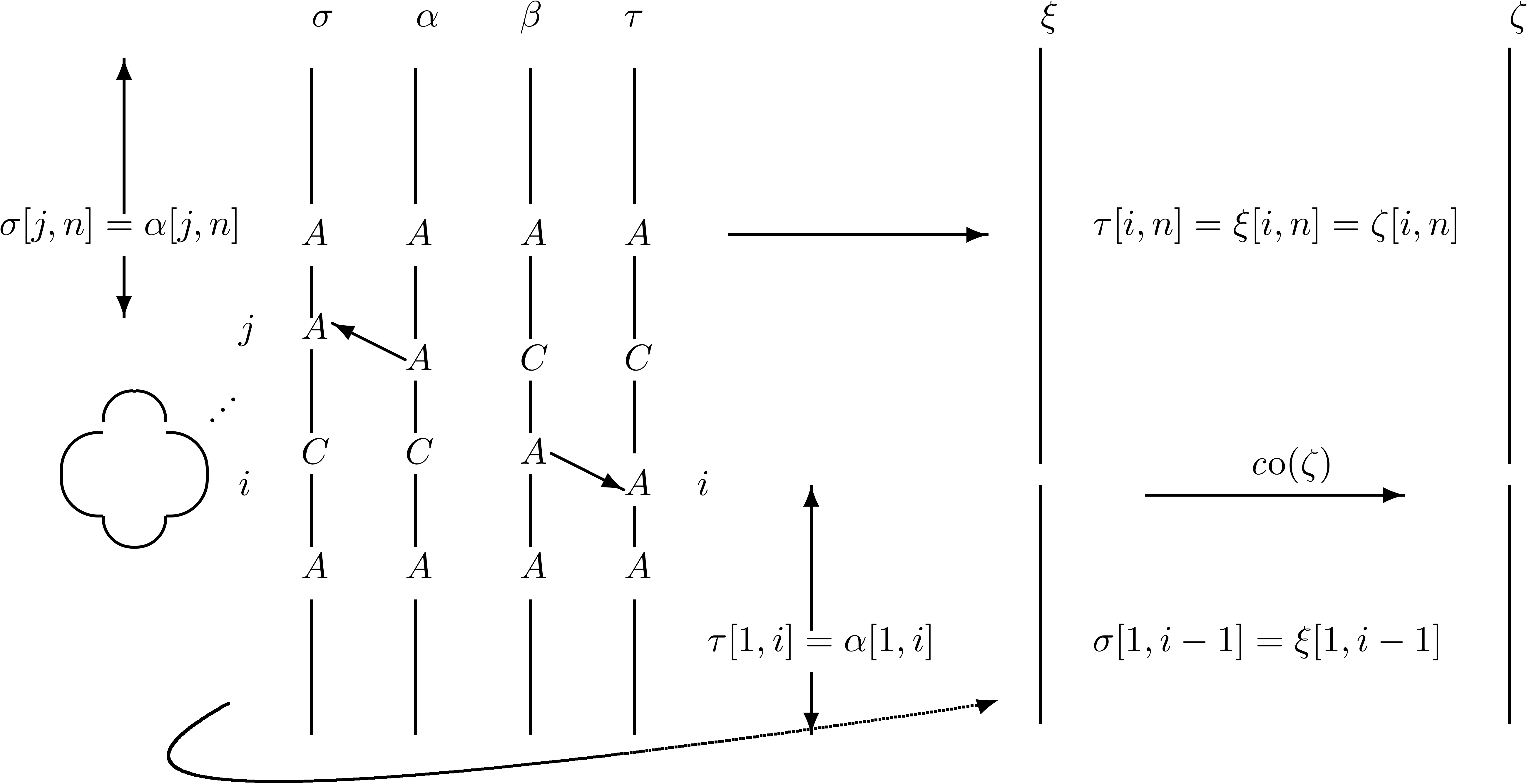}}}

\caption{We define ${\cal F}_{e}(\sigma, \tau)=\zeta$. To produce $\zeta$ we first concatenate $\sigma[1,i-1]$ and $\tau[i,n]$, then
substitute some particles.}\label{Move2fig}
\end{figure}

Consider $\sigma, \tau$ such that $e\in \gamma_{\sigma,\tau}$. Let ${\cal F}_e(\sigma,\tau)=\zeta$. We have, 
$$\frac{\pi(\zeta)}{\pi(\alpha)}= \left(\frac{\pi(\tau)}{\pi(\alpha)}\right)\left(\frac{\pi(\sigma)}{\pi(\alpha)}\right)
\left(\frac{w^i(\alpha[i,j])}{w^i(\sigma[i,j])}\right)
\emph{co}(\zeta),
$$
where the later term is the substitution cost, and $w^i(\sigma[i,j]):= \prod_{\substack{k=i}}^{j} s(k)^{i+\sigma^{-1}(k)}$.
Having $g=A_m(\alpha)-C_l(\alpha)$ we will get:

$$
\Phi_{e}= \left(1+(s_{A}/s_{C})^g\right)\left( \displaystyle\sum_{\substack{\sigma;\\ \alpha[j{+}1,n]=\sigma[j{+}1,n]}} \frac{\pi(\sigma)}{\pi(\alpha)} \sum_{\substack{\tau;\\ \alpha[1,i{-}1]=\tau[1,i{-}1]}} \frac{\pi(\tau)}{\pi(\alpha)} \right)\pi(\alpha)
$$

Let ${\cal S}_t$  be the set of all $\sigma$s with $t$ substitutions. We  have:
\begin{equation}\label{eq3}
\Phi_{e}\leq \sum_{\substack{\zeta \text{ needs }t \\\text{ substititions}}} \frac{1}{\emph{co}(\zeta)} \displaystyle \sum_{\tau}\sum_{\sigma\in S_t}\left( \frac{\pi({\cal F}_e(\sigma,\tau))}{\pi(\alpha)}\right)
\left( \frac{w^i(\sigma[i,j])}{w^i(\alpha[i,j])}\right){\pi(\alpha)}.
\end{equation}

Let $M_t(\alpha)$  be the  arrangement that we get from replacing the lowest $t$ copies of particle $C$ with copies of particle $A$
in  $\alpha[i,j]$. We have:
$
\sum_{\sigma\in S_t} \frac{w^i(\sigma[i,j])}{w^i(\alpha[i,j])}= \frac{w^i(M_t)Q^i_{\bar{B}}(M_t(\alpha))}{w^i(\alpha[i,j])}
$, where $w^i(\sigma[i,j]):= \prod_{\substack{k=i}}^{j} s(k)^{i+\sigma^{-1}(k)}$, and 
$Q^i_{\bar{B}}(M_t(\alpha))
:=\sum_{\substack{\sigma: \text{fix the positions of all Bs}\\ \text{in $M_t(\alpha)$ and rearrange}\\ \text{ the rest of particles }
}} \pi(\sigma)  /\pi(M_t(\alpha)).
$

\begin{figure}[!ht]

\centerline{\resizebox{0.4\linewidth}{!}{\includegraphics{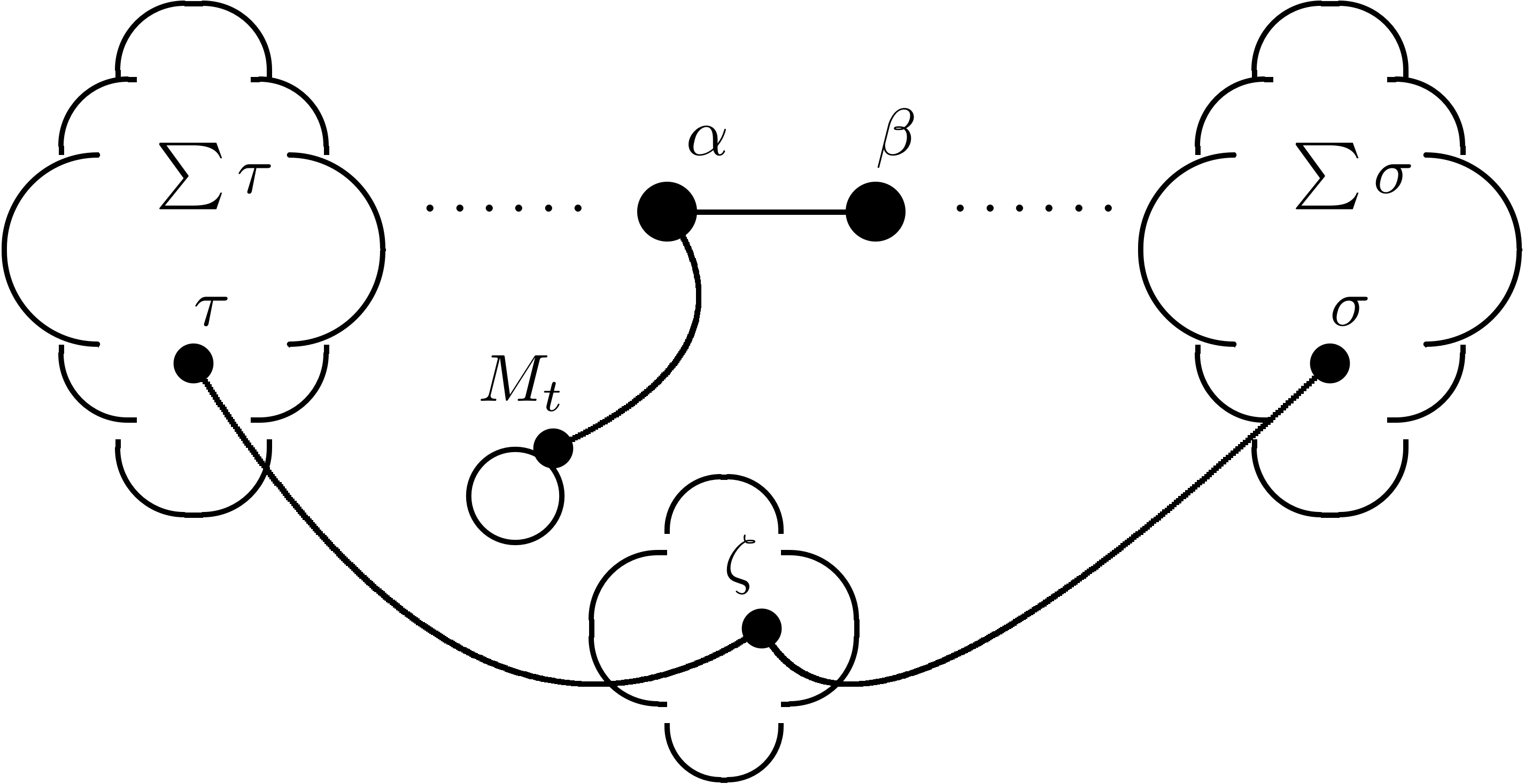}}}

\caption{We obtain $M_t(\alpha)$ from $\alpha$ and then take the sum over all $\zeta$s (Equation \ref{eq3}). }
\end{figure}\label{fig6}
\smallskip
Note that $w^i(M_t)Q_{\bar{B}}(M_t)\leq q^{t(t+1)-2t} w^i(\alpha[i,j])$,  $q$ being $\max\{s_A/s_B,s_B/s_C\}$.
This inequality holds because $Q_{\bar{B}}(M_t)\leq {{y}\choose{t}}_{s_A/s_C}\leq q^{-2t} $ and $w(M_t)/ w(\alpha[i,j])\leq
q^{t(t+1)}$(See Figure 6).

Moreover, $ \sum_{\substack{\zeta \text{ needs }t \\\text{ substititions}}}\frac{1} {\emph{co}(\zeta)}\leq
{{t+b'}\choose{t}}_{q^2}\leq q^{-2t} $, where $b'$ is the number of $B$s in $\sigma[0,i-1]$ and $q=\max\{s_A/s_B, s_B/s_C\}$.

\smallskip
Putting all of the above inequalities together, we will have that each edge of Jump is only congested by:
$$\Phi_e\leq (1+q^g)\sum_t (q^{t(t+1)-4t})\leq n.$$

So far, we showed that any Jump edge is only congested by a factor of a polynomial function of $n$. 
Consider  an edge corresponding to a Hop, namely $e$. We denote this edge by $e=(\alpha,\beta)$. Assume we are swapping $A$ and $B$.

\medskip

Consider a state $\sigma$ traversing $e$ to get to $\tau$, and assume we traversed $e$ while fixing block $[i,j]$.
Since we are making a Hop,  $A$s and $C$s in the block are fixed according to $\tau$, and we are  bringing the $k$th  $B$ to its position in $\tau$.

\medskip

Before we proceed to the proof there is a subtlety about using a Hop that needs to be explained. If $A_k$ has to go down to reach
its position in $\tau$  or if there is only one copy of it in the block there is no complication. 
Let's assume we have $t$ copies of particle $B$ in $\sigma[i,j]$. All of the $t$ copies of $B$ should move up and stand out
of block $\sigma[i,j]$ to reach their position in $\tau$. In order to accomplish this, we choose a subset $S$ of
$\{1_k,\dots 1_{t{+}k}\}$ uniformly at random and we move the elements of $S$ in decreasing order of their index out of the block.  

\smallskip

Assume, when going from $\sigma$ to $\tau$ we used  $e=(\alpha,\beta)$ and in $\alpha[i,j]$ we have $t$ copies of particle $B$:
$B_k, \dots , B_{k{+}t}$ and swapping  $B_{k+l}, B_{k+l+1},\dots , B_{k+d}$  with the next $A$.  We have,
$\tau[1,i]=\alpha[1,i]$, $\sigma[j+t,n]=\alpha[j+t,n]$, and for any $i$, if
$ B_{k+i}(\alpha)<B_{l+k}(\alpha) \text{ then, }  B_{k+i}(\alpha)=B_{k+i}(\sigma)$. 
The following information about $S$ can be determined by examining $\alpha$ and $\beta$: $B_{k+d+1},\dots B_{k+t}\notin S$
while $S$ may contain any of $B_{k},\dots B_{k+l}$. Therefore, among the random paths connecting $\sigma$ to $\tau$,
there are $2^l$ subsets traversing through $e$ and hence the congestion they place on 
$e$ is $\pi(\tau)\pi(\sigma)/2^{t-l}$.

 \smallskip

To bound  $\Phi_{e}$ for each $e$ we introduce correspondence 
${\cal F}_e:\Omega_{a,b,c}\times \Omega_{a,b,c}\rightarrow \Omega_{a,b,c}$  satisfying:
\begin{equation}\label{eq4} \hspace{2.5cm}
\forall \zeta\in {\cal F}_e(\Omega_{a,b,c}); ~\frac{\sum_{\substack{\sigma,\tau;\\ {\cal F}^{-1}_{e}(\zeta)=(\sigma,\tau)}} \pi (\sigma)\pi(\tau) }{\pi(\alpha)}\leq 2^{t-l}  \pi(\zeta); 
\end{equation}
where $c$ is the number of $C$s in $\alpha[i,j]$ and ${\cal F}_e (\sigma,\tau) \neq \mbox{NULL} \text{ if and only if, }
e=(\alpha,\beta) \in \gamma_{\sigma,\tau}.$

Let $\sigma$ and $\tau$ be two ends of a path traversing through  $e$. We define ${\cal F}_e := \sigma[1,i-1]\vert \tau[i,n]$;
to verify Equation \ref{eq4}, take $\zeta= {\cal F}_e(\sigma,\tau)$. We have
$
\frac{\pi(\sigma)\pi(\tau)}{\pi(\alpha)}= \frac{\pi(\sigma)}{\pi(\alpha)}\frac{\pi(\tau)}{\pi(\alpha)}\pi(\alpha).
$
Thus,
$$
\begin{array}{ll}
 \frac{\pi(\zeta)}{\pi(\alpha)} =&\frac{\pi(\zeta[1,i-1])}{\pi(\alpha[1,i-1])} \frac{\pi(\zeta[i,j-1])}{\pi(\alpha[i,j-1])} \frac{\pi(\zeta[j,n])}{\pi(\alpha[j,n])} =\frac{\pi(\sigma[1,i-1])}{\pi(\alpha[1,i-1])} \frac{\pi(\tau[i,j-1])}{\pi(\alpha[i,j-1])} \frac{\pi(\tau[j,n])}{\pi(\alpha[j,n])}\\
\hspace{0.75cm}=&\frac{\pi(\sigma')}{\pi(\sigma)} ~ \frac{\pi(\sigma)}{\pi(\alpha)} \frac{\pi(\tau)}{\pi(\alpha)},
\end{array}$$
where  $\sigma'$ is the following arrangement: $\sigma':=\alpha[1,i-1]\vert\sigma[i,j-1]\vert\alpha[j,n]$.
We have $\pi(\sigma')/\pi(\alpha)=\pi(\sigma[i,j])/\pi(\alpha[i,j])$. Hence, 
$$
\sum_{\sigma,\tau; {\cal F}(\sigma,\tau)=\zeta} \frac{\pi(\sigma)\pi(\tau)}{\pi(\alpha)}= 
\sum_{\substack{\sigma,\tau\\ {\cal F}(\sigma,\tau)=\zeta}} \frac{\pi(\sigma')}{\pi(\sigma)} \pi(\zeta).
$$

Since we have $t-l$ $B$s with undecided position between $j{-}i$ other elements we have $\sum \frac{\pi(\sigma')}{\pi(\sigma)} \leq {{j-i+t-l}\choose{t-l}}_q$, where $q=\max\{s_A/s_B, s_B/s_C\}$. Thus, we have $\sum \frac{\pi(\sigma')}{\pi(\sigma)}  \leq 2^{t-l} $.  Hence, the congestion placed on $e$ is:

$$
 \Phi_{e=(\alpha,\beta)}=  (1+q^g)\sum_{\substack {\sigma,\tau\\ e\in \gamma_{\sigma,\tau}}} \frac{\pi(\sigma)\pi(\tau)}{\pi(\alpha) 2^{t-l}} \leq 1.
$$

Summing up, we showed that for any edge $e$, $\Phi_{e}\leq \max \{n,1\}$.

 \end{proof}
 
Having the above claim, we now use the path congestion Theorem (Theorem \ref{canonicalPaths}) to bound $t(\EX_t(n))$:
 \begin{thm}\label{big}
If $s_A/s_B,s_B/s_C\leq 1/2$, then $t(\EX_t(n))\leq {\cal O}(n^4)$. 
\end{thm}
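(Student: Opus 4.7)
The plan is to invoke the Canonical Paths Theorem (Theorem \ref{canonicalPaths}) directly, using the congestion bound already established in Claim \ref{PathClaim} together with a routine estimate on $\pi_{\min}$. The Theorem states
\[
t_{\epsilon}(\EX_t(n)) \;\leq\; 8\,\Phi^{2}\bigl(\log \pi_{\min}^{-1} + \log \epsilon^{-1}\bigr),
\]
so it suffices to control these two quantities.

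First, the canonical paths $\{\gamma_{\sigma,\tau}\}$ defined in the construction preceding Claim~\ref{PathClaim} are defined for every pair $\sigma,\tau \in \Omega_{a,b,c}$ (the block-by-block iteration using Jumps then Hops always terminates at $\tau$). By Claim~\ref{PathClaim}, $\Phi_e \le n$ for every edge $e$ of $\EX_t$, hence $\Phi \le n$ and $\Phi^{2} \le n^{2}$.

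Second, I would estimate $\pi_{\min}$. Recall $\pi(\sigma) \propto \prod_{i=1}^{n} s(i)^{\sigma^{-1}(i)}$, where each $s(i) \in \{s_A, s_B, s_C\}$. The exponent $\sigma^{-1}(i)$ ranges over $\{1,\dots,n\}$, so between any two states the log-ratio of stationary weights is at most $O(n^{2}) \log(s_C/s_A)$. Setting $q = \max\{s_A/s_B,\, s_B/s_C\} \le 1/2$, which is a constant, we obtain $\pi(\sigma)/\pi(\sigma') \ge q^{O(n^{2})}$ for all $\sigma,\sigma'$, and in particular $\pi_{\min} \ge q^{O(n^{2})}$. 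Therefore $\log \pi_{\min}^{-1} = O(n^{2})$ (treating $q$ as a constant and taking $\epsilon = 1/4$, so that $\log\epsilon^{-1}$ is absorbed).

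Combining these two estimates in the Canonical Paths bound gives
\[
t(\EX_t(n)) \;\leq\; 8 \cdot n^{2} \cdot O(n^{2}) \;=\; O(n^{4}),
\]
which is the claimed bound. The real work has already been done: the only substantive step is Claim~\ref{PathClaim}, whose proof required the careful use of $q$-binomial estimates (Lemma~\ref{qlemma}) to show that the map $\cal F_e$ is at most polynomially many-to-one in a suitable weighted sense. The present theorem is just an assembly of that claim with the trivial $\pi_{\min}$ estimate.
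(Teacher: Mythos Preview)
Your proposal is correct and is essentially the same argument as the paper's: apply Theorem~\ref{canonicalPaths} with $\Phi\le n$ from Claim~\ref{PathClaim} and the estimate $\pi_{\min}\ge q^{O(n^2)}$ (the paper uses $q^{n(n+1)}$), yielding $t(\EX_t(n))\le 8n^2\cdot O(n^2)=O(n^4)$.
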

\begin{proof}

Since $\pi_{min}\geq q^{n(n+1)}$, $q$ being maximum of $s_A/s_B$ and $s_B/s_C$, we can  apply Theorem \ref{canonicalPaths} and we will have,
$
t_{\epsilon}({\EX}_{t})\leq 8n^2 (n^2+ \ln (\epsilon^{-1}))
\implies
t({\EX}_{t})\leq 8n^4.
$

\end{proof}

Finally, from Lemma \ref{lem1} and Theorem \ref{big} we conclude Theorem \ref{mainthm}.

\section{League Hierarchies for Trenary Trees.}\label{trees}
 
As mentioned earlier in Section \ref{intro}, the \emph{league hierarchies} are a class of monotone adjacent transposition Markov chains and
were introduced by Bhakta et al.\ (SODA 2014) \cite{Dana} for binary trees. Here we extend their definition to trenary trees. 
   
\begin{defin}\textbf{(League hierarchies for ternary trees).} \label{LeagueHiDef}
Consider a ternary tree whose leaves are labeled by $1,2,\dots n$, and inside each interior node $v$ there are three numbers
$S_{R,v}>S_{C,v}> S_{L,v}$ satisfying :
$\frac{S_{R,v}}{S_{C,v}}, \frac{S_{C,v}}{S_{L,v}}>1/2$.
We define the Markov chain $\Lc$ on $S_n$ as follows.  In state $\sigma\in S_n$, $1\leq i \leq n$ is chosen uniformly at random and
$\sigma_i$ and $\sigma_{i+1}$ are swapped with probability $p_{\sigma(i), \sigma(i+1)}$. The $p_{i,j}$s are defined in accordance with
the tree structure.  For each  $i\leq j$, the probability if swapping $p_{i,j}$ is equal to $g$; $S_{X,v}/(S_{X,v}+S_{Y,v})$, $v$
being the lowest common ancestor of the leaves labeled  $i$ and $j$, and $X,Y$ being one of $R,L,C$ depending respectively on whether
they are in the right, left or central subtree rooted at $v$.
 \end{defin}

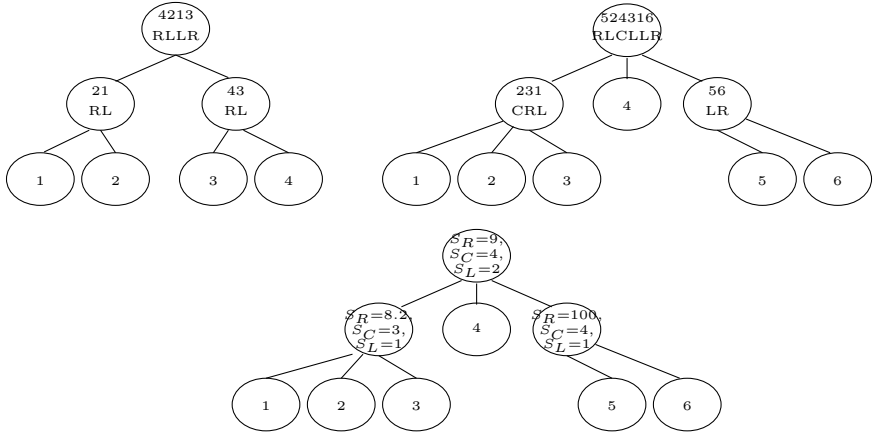
\begin{figure}[!h]\label{figlatticepath}
\begin{tikzpicture}\label{fig7}

\draw (4,4) ellipse (4.5mm and 3.5 mm);
\node[above] at (4,4) {\tiny 4213};
\node[below] at (4,4.1) {\tiny RLLR};

\draw (3,3) ellipse (4.5mm and 3.5 mm);
\node[above] at (3,3) {\tiny 21};
\node[below] at (3,3.1) {\tiny RL};

\draw (4.8,3) ellipse (4.5mm and 3.5 mm);
\node[above] at (4.8,3) {\tiny 43};
\node[below] at (4.8,3.1) {\tiny RL};

\draw (4.5,2) ellipse (4.5mm and 3.5 mm);
\node[above] at (4.5,1.8) {\tiny 3};

\draw (5.5,2) ellipse (4.5mm and 3.5 mm);
\node[above] at (5.5,1.8) {\tiny 4};

\draw (2.2,2) ellipse (4.5mm and 3.5 mm);
\node[above] at (2.2,1.8) {\tiny 1};

\draw (3.2,2) ellipse (4.5mm and 3.5 mm);
\node[above] at (3.2,1.8) {\tiny 2};

\draw (4.7,3.35) -- (4,3.65);
\draw (3.2,3.3) -- (4,3.65);

\draw (5.5,2.35) -- (4.9,2.65);
\draw (4.5,2.35) -- (4.7,2.65);
\draw (3.2,2.35) -- (3,2.65);
\draw (2.25,2.36) -- (2.85,2.65);

\draw (7.2,2) ellipse (4.5mm and 3.5 mm);
\node[above] at (7.2,1.8) {\tiny 1};

\draw (8.2,2) ellipse (4.5mm and 3.5 mm);
\node[above] at (8.2,1.8) {\tiny 2};

\draw (9.2,2) ellipse (4.5mm and 3.5 mm);
\node[above] at (9.2,1.8) {\tiny 3};

\draw (10,3.98) ellipse (4.5mm and 3.5 mm);
\node[above] at (10,3.97) {\tiny 524316};
\node[below] at (10,4.1) {\tiny RLCLLR};

\draw (10,3) ellipse (4.5mm and 3.5 mm);
\draw (10,3.33) -- (10,3.61);
\node[above] at (10,2.8) {\tiny 4};

\draw (11.8,2) ellipse (4.5mm and 3.5 mm);
\node[above] at (11.8,1.8) {\tiny 5};

\draw (12.8,2) ellipse (4.5mm and 3.5 mm);
\node[above] at (12.8,1.8) {\tiny 6};

\draw (8.7,3) ellipse (4.5mm and 3.5 mm);
\node[above] at (8.7,3) {\tiny 231};
\node[below] at (8.7,3.1) {\tiny CRL};

\draw (11.2,3) ellipse (4.5mm and 3.5 mm);
\node[above] at (11.2,3) {\tiny 56};
\node[below] at (11.2,3.1) {\tiny LR};

\draw (9,3.3) -- (9.8,3.65);
\draw (11,3.3) -- (10.2,3.65);

\draw (7.2,2.35) -- (8.35,2.77);
\draw (8.2,2.35) -- (8.49,2.7);
\draw (9.2,2.35) -- (8.7,2.65);
\draw (11.8,2.35) -- (11.2,2.65);
\draw (12.7,2.35) -- (11.58,2.8);


\draw (5.2,-1) ellipse (4.5mm and 3.5 mm);
\node[above] at (5.2,-1.2) {\tiny 1};

\draw (6.2,-1) ellipse (4.5mm and 3.5 mm);
\node[above] at (6.2,-1.2) {\tiny 2};

\draw (7.2,-1) ellipse (4.5mm and 3.5 mm);
\node[above] at (7.2,-1.2) {\tiny 3};

\draw (8,0.98) ellipse (4.5mm and 3.5 mm);
\node[below] at (8,0.2) {\tiny 4};

\draw (8,0) ellipse (4.5mm and 3.5 mm);
\draw (8,0.33) -- (8,0.61);

\draw (9.8,-1) ellipse (4.5mm and 3.5 mm);
\node[above] at (9.8,-1.2) {\tiny 5};

\draw (10.8,-1) ellipse (4.5mm and 3.5 mm);
\node[above] at (10.8,-1.2) {\tiny 6};

\draw (6.7,0) ellipse (4.5mm and 3.5 mm);

\draw (9.2,0) ellipse (4.5mm and 3.5 mm);

\node[above] at (9.2,-0.04) {${\scriptscriptstyle S_{R}=100,}$}; 
\node[below] at (9.2,0.18) {${\scriptscriptstyle S_{C}=4,}$};
\node[below] at (9.2,0.0) {${\scriptscriptstyle S_{L}=1}$};
\node[above] at (8,0.97) {${\scriptscriptstyle S_{R}=9,}$}; 
\node[below] at (8,1.2) {${\scriptscriptstyle S_{C}=4,}$};
\node[below] at (8,1.0) {${\scriptscriptstyle S_{L}=2}$};
\node[above] at (6.7,-0.04) {${\scriptscriptstyle S_{R}=8.2,}$}; 
\node[below] at (6.7,0.18) {${\scriptscriptstyle S_{C}=3,}$};
\node[below] at (6.7,0.0) {${\scriptscriptstyle S_{L}=1}$};

\draw (7,0.3) -- (7.8,0.65);
\draw (9,0.3) -- (8.2,0.65);

\draw (5.2,-0.65) -- (6.35,-0.33);
\draw (6.2,-0.65) -- (6.49,-0.33);
\draw (7.2,-0.65) -- (6.7,-0.35);
\draw (9.8,-0.65) -- (9.2,-0.35);
\draw (10.7,-0.65) -- (9.58,-0.2);

\end{tikzpicture}
\caption{The  league hierarchies for binary trees and the ternary trees.
}
\end{figure}

When the tree is a binary tree, Bhakta et al.\ (SODA 2014) \cite{Dana}, analyzed the mixing time of the league hierarchies using the comparison
theorem and by comparing it to the following Markov chain:

\smallskip

Consider a tree whose leaves are labeled by $1,2,\dots n$, and inside each interior node $v$ there are three numbers $S_{R,v}>S_{C,v}> S_{L,v}$
satisfying $\frac{S_{R,v}}{S_{C,v}}, \frac{S_{C,v}}{S_{L,v}}>1/2$. The Markov chain $\M_{tree}$ works as follows:
in state $\sigma\in S_n$, $i$ and $j$ with $1\leq i<j \leq n$ are chosen uniformly at random. Let $v=i\wedge j$ be the lowest common ancestor
of $i$ and $j$, and  $T_v$ the subtree rooted at $v$. We swap $\sigma(i)$ and $\sigma(j)$ iff $\sigma(i+1),\sigma(i+2),\dots, \sigma(j-1)\notin T_v$,
with the probabilities given in Definition \ref{LeagueHiDef}
 
Bhakta et al.\ proved that if the tree is a binary tree then $\M_{tree}$ is rapidly mixing. In the following lemma we extend their result:  
 
\begin{lemma} For the ternary tree labeled as in Definition \ref{LeagueHiDef}, $t (\M_{tree})\leq  {\cal O}(n^{10}\log n)$.
 
\end{lemma}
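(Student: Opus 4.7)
The plan is to bound $t(\M_{tree})$ via the path congestion method (Theorem~\ref{canonicalPaths}), extending the canonical-paths analysis of Claim~\ref{PathClaim} to the layered structure of the ternary tree. A naive application of the decomposition theorem at the root, with restriction chain $\EX_3(n)$ (cost $\mathcal{O}(n^{10})$) and projection chain analysed by induction, would multiply the two factors and give the wrong bound; direct canonical paths avoid this multiplicative blowup.

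Given $\sigma, \tau \in S_n$, I would construct $\gamma_{\sigma,\tau}$ in two nested phases. First, label each position of $\sigma$ by which root-subtree ($T_L$, $T_C$, or $T_R$) its element belongs to; the root-level moves of $\M_{tree}$ act on this coarse labelling exactly as the Jump and Hop moves of $\EX_t$ do on L/C/R particles. Using the paths of Claim~\ref{PathClaim}, I would rearrange the coarse labelling of $\sigma$ to match that of $\tau$. Then, for each $X \in \{L, C, R\}$, I would recursively reorder the elements within the subtree $T_X$ using league-hierarchy moves at nodes of $T_X$, until the internal ordering agrees with $\tau$'s.

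To bound $\Phi_e$ at an edge $e = (\alpha, \beta)$, I would split into two cases. If $e$ corresponds to a root-level swap, only the coarse-pattern phase of paths can cross $e$ and Claim~\ref{PathClaim} directly gives $\Phi_e \leq \mathcal{O}(n)$. If $e$ is a move at a deeper node $v \in T_X$, only paths whose recursive phase inside $T_X$ uses $e$ cross it, and an inductive version of the Claim~\ref{PathClaim} argument on subtree size yields $\Phi_e \leq \mathcal{O}(n^4)$. In both cases the key ingredient is the injection-like correspondence $\mathcal{F}_e(\sigma, \tau) = \zeta$ formed by concatenating prefixes and suffixes of $\sigma$ and $\tau$ and performing particle substitutions, together with the $q$-binomial bound of Lemma~\ref{qlemma} to absorb the stationary weight of these substitutions.

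Plugging $\Phi \leq \mathcal{O}(n^4)$ and $\log \pi_{\min}^{-1} = \mathcal{O}(n^2 \log n)$ (the extra $\log n$ coming from the compounding of strength ratios across the $\mathcal{O}(\log n)$ levels of $T$, so that the overall ratio $s_{\max}/s_{\min}$ is polynomial in $n$) into Theorem~\ref{canonicalPaths} gives $t(\M_{tree}) \leq 8 \Phi^2 \log \pi_{\min}^{-1} = \mathcal{O}(n^{10} \log n)$. The hard part will be coordinating the path construction so that congestion from the root phase and from the recursive subtree phases are charged to disjoint (or controllably overlapping) edges; in particular, the $q$-binomial bookkeeping must be carried out simultaneously at each tree depth, with the substitution cost at one level not interfering with those at other levels, and verifying this clean separation is the main technical obstacle.
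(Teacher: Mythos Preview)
Your plan takes a very different and much harder route than the paper, and it rests on two misidentifications.

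The paper's argument is essentially one line: $\M_{tree}$ is a \emph{product} of independent three-particle systems, one $\EX_3$ chain for each internal node $v$ of $T$ (the component at $v$ is the L/C/R word recording, for each leaf of $T_v$ in its current left-to-right order, which child of $v$ it descends from; a move of $\M_{tree}$ whose LCA is $v$ changes only that word and leaves every other component fixed). Each factor mixes in $\mathcal{O}(n^{10})$ by Theorem~\ref{mainthm}, and Theorem~\ref{product} with $k\le n{-}1$ components then gives $t(\M_{tree})\le\mathcal{O}(n^{10}\log n)$, the $\log n$ arising from the $t_{\epsilon/2k}$ term.

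You explicitly dismiss a ``decomposition at the root'' as producing a multiplicative blowup, but this conflates Theorem~\ref{decom} (decomposition) with Theorem~\ref{product} (independent products). The crucial structural fact you are missing is that the components of $\M_{tree}$ are genuinely independent---any tuple of L/C/R words, one per internal node, corresponds to a unique permutation, the stationary measure factors over nodes, and each transition touches a single factor---so the product theorem applies and there is no blowup at all.

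Your canonical-paths plan also misdescribes the moves of $\M_{tree}$. You assert that ``the root-level moves of $\M_{tree}$ act on this coarse labelling exactly as the Jump and Hop moves of $\EX_t$,'' but when $v$ is the root the condition $\sigma(i{+}1),\dots,\sigma(j{-}1)\notin T_v$ forces $j=i{+}1$, since $T_{\text{root}}$ contains every leaf. Root-level moves are therefore adjacent transpositions---$\EX_3$ moves, not $\EX_t$ moves---so the paths of Claim~\ref{PathClaim} cannot be laid down directly; you would first have to unwind the comparison of Lemma~\ref{lem1}, which changes the congestion accounting. Separately, your estimate $\log\pi_{\min}^{-1}=\mathcal{O}(n^2\log n)$ presumes tree depth $\mathcal{O}(\log n)$, which is nowhere assumed. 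The product-chain route avoids all of these issues.
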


\begin{proof}
The  Markov chain of our discourse is a product of $n-1$ smaller three particle systems (See Figure 7). Thus, by
Theorem \ref{product} and \ref{mainthm} we conclude the result. 
\end{proof}

Bhakta et al.\ \cite{Dana} had shown that the two Markov chains $\Lc$ and $\M_{tree}$ have the same stationary distribution.
To conclude  Corollary \ref{LeagueHi}, it remains to show that their mixing time is related. As in Section \ref{proof}, we use the comparison technique.

\begin{lemma}
$t(\Lc)\leq n^4 t(\M_{tree})$.
\end{lemma}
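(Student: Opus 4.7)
The plan is to apply the Diaconis--Saloff-Coste comparison technique in the spirit of Lemma~\ref{lem1}: since $\M_{tree}$ and $\Lc$ share the same stationary distribution, I will express each $\M_{tree}$-edge as a canonical path of $\Lc$-edges and bound the congestion placed on any fixed adjacent transposition. For an $\M_{tree}$-edge $(\sigma,\tau)$ that swaps positions $i<j$ (where the intermediates $\sigma(i{+}1),\dots,\sigma(j{-}1)$ all lie outside the subtree $T_v$ rooted at $v=i\wedge j$), define $\gamma_{\sigma,\tau}$ to consist of $2(j{-}i){-}1$ adjacent transpositions: first bubble $x:=\sigma(i)$ rightward past $\sigma(i{+}1),\dots,\sigma(j{-}1),y:=\sigma(j)$, then bubble $y$ (now at position $j{-}1$) leftward back to position $i$.

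For a fixed adjacent edge $e'=(\alpha,\beta)$ of $\Lc$ at positions $(k,k{+}1)$, the path $\gamma_{\sigma,\tau}$ crosses $e'$ only when $i\le k<j$, giving at most $O(n^2)$ admissible pairs $(i,j)$; once $(i,j)$ and a passage direction (rightward or leftward leg) are fixed, $\alpha$ uniquely determines $\sigma$ by reversing the partial path. Thus the total number of $(\sigma,\tau)$ to consider is $O(n^2)$. The transition probabilities compare as $\M_{tree}(\sigma,\tau)/\Lc(\alpha,\beta)=O(1/n)$, since choosing a pair costs $\binom{n}{2}^{-1}$ versus $(n{-}1)^{-1}$, while the remaining swap probabilities are bounded away from $0$ by the assumed constraints on the tree weights.

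The key structural point is that because each intermediate $z_m=\sigma(i{+}m)$ lies outside $T_v$, the LCAs $x\wedge z_m$ and $y\wedge z_m$ both equal the same ancestor $w$ of $v$, and $x,y$ sit in the same child subtree of $w$; hence $p_{x,z_m}=p_{y,z_m}$. Consequently the stationary-weight ratios produced by moving $x$ past $z_m$ on the rightward leg exactly cancel those produced by moving $y$ past $z_m$ on the leftward leg, leaving $\pi(\sigma)/\pi(\alpha)$ controlled by only a few $O(1)$ surviving factors (those arising from the single $x$--$y$ swap and from whichever pairs are still mid-cancellation at the snapshot $\alpha$). Assembling the ingredients,
\[
A_{e'}=\frac{1}{\pi(\alpha)\Lc(\alpha,\beta)}\sum_{\gamma_{\sigma,\tau}\ni e'}|\gamma_{\sigma,\tau}|\,\pi(\sigma)\M_{tree}(\sigma,\tau)\;\le\;O(n^2)\cdot O(n)\cdot O(1/n)\;=\;O(n^2).
\]

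Combined with $\log\pi_{\min}^{-1}=O(n^2)$, which follows from $\pi_{\min}\ge q^{n(n+1)}$ for the largest ratio $q$ of consecutive strengths in the tree, the Diaconis--Saloff-Coste comparison then yields $t(\Lc)\le O(n^4)\,t(\M_{tree})$ as asserted. I expect the main obstacle to be the careful bookkeeping in the cancellation step: one must track precisely which LCA governs each adjacent swap along the path and verify that, even when $\alpha$ is an arbitrary intermediate that interrupts a cancelling pair, no stray stationary-weight factors accumulate beyond $O(1)$, so that the $O(1/n)$ weight ratio survives uniformly in $e'$.
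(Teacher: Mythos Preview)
Your identification of the key structural fact---that each intermediate $z_m$ lies outside $T_v$, hence $x\wedge z_m=y\wedge z_m$ and $p_{x,z_m}=p_{y,z_m}$---is exactly right, and it is the engine that makes the paper's argument work.  However, the cancellation you describe is an \emph{end-to-end} cancellation (between the rightward move of $x$ past $z_m$ and the later leftward move of $y$ past $z_m$); it does not control $\pi(\sigma)/\pi(\alpha)$ at an arbitrary snapshot $\alpha$ on your bubble-sort path.  Concretely, suppose $v$ sits in the $R$-child of the root and all the intermediates $z_1,\dots,z_{j-i-1}$ lie in the $L$-child of the root.  Midway through the rightward leg, after $x$ has crossed $z_1,\dots,z_m$ but $y$ has not moved at all, each crossing contributes a factor $s_{L,r}/s_{R,r}<1$ to $\pi(\alpha)/\pi(\sigma)$, so $\pi(\sigma)/\pi(\alpha)=(s_{R,r}/s_{L,r})^{m}$, which is exponential in $m$.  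None of these factors is yet ``mid-cancellation'' in any useful sense: all $m$ of them are outstanding, not $O(1)$ of them.  This makes your congestion estimate $A_{e'}=O(n^2)$ fail.

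The paper avoids exactly this pitfall by \emph{interleaving}: it moves $x$ one step inward and then $y$ one step inward, choosing the order and (in the $C$--$C$ case) rearranging the intermediates so that each single-step factor from $x$'s move is immediately undone by the companion factor from $y$'s move.  Because of the equality $p_{x,z}=p_{y,z}$ you already noted, after each such pair of adjacent transpositions $\pi$ returns to within a fixed constant of $\pi(\sigma)$, and the ratio at \emph{every} intermediate state is bounded by $q=\max\{s_R/s_C,\,s_C/s_L\}$.  Your ``rightward then leftward'' path can be repaired to an alternating path of this kind; once you do so, the rest of your accounting ($O(n^2)$ pairs $(i,j)$, path length $O(n)$, transition-probability ratio $O(1/n)$) goes through essentially unchanged and matches the paper's $O(n^3)$ congestion leading to the $n^4$ bound.
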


\begin{proof}

We label the interior nodes of the tree as done in Figure 7, then each edge in $\M_{tree}$ will be corresponding to the exchange
of two particles of the same type in the root between which there is no particle of the same type.
Consider an arbitrary edge $e=(\sigma,\tau)\in \M_{tree}$, we correspond the path $\Gamma_{e}$ lying on $\Lc$
whose construction will be explained in the next paragraph. As an example,  assume we are swapping $5$ and $6$ in $587231964$
in a full balanced binary tree with $9$ leaves and thus labeled by $CRRLLLRCC$. We  define the path between any arbitrary
$\sigma$ and $\tau$ in $\Lc$ in which the particles at positions $i$ and $j$ are swapped as follows: 
Do the following until $\sigma(i)$ and $\sigma(j)$ meet (we call this stage one):
\begin{enumerate}
\item If $\sigma(i+1)=\sigma(j-1)$, swap $\sigma(i)$ and $\sigma(i+1)$, then swap $\sigma(j-1)$ and $\sigma(j)$. Then, repeat. 

In the example starting from $\sigma=587231964; C{\color{red}R}RLLL{\color{red}R}CC$, the first edges in the path will be
$587231964\rightarrow {\color{red}85}7231964\rightarrow 857231{\color{red}69}4$. 
\item If $\sigma(i+1)\neq \sigma(j-1)$, then swap $\sigma(i)$ and $\sigma(i+1)$ if  $\sigma(i) > \sigma(i+1)$.
Otherwise, swap $\sigma(j)$ and $\sigma(j-1)$ if  $\sigma(j) > \sigma(j-1)$. In case none of the above holds,
we can conclude that both $\sigma(i)$ and $\sigma(j)$ are labeled by $C$ and between them we have a sequence of $L$ and $R$s.
In this case, using adjacent transpositions, take   $\sigma(k)$ to the position of $i+1$ if $\sigma(i+1)$ is labeled with $R$,
and $k$ is the smallest index greater than $i+1$ so that $\sigma(k)$ is labeled by $L$. Then, repeat. Otherwise, find a $k$
with the largest index smaller than $j-1$ that is labeled by $R$ and take it to the position at $j-1$. Then, repeat. 

In our example we continue by the following edges: 

$857231694 ;R C{\color{red}R}LL{\color{red}L}CRC \rightarrow 852{\color{red}7}31694 \rightarrow 8523{\color{red}7}1694 
\rightarrow 85231{\color{red}7}694.  $
Then, we restart: $852317694; R CLLLCRC \rightarrow 852317694    \rightarrow 825316794  \rightarrow 823561794  $.
\end{enumerate}

Repeat the above steps until $\sigma(i)$ and $\sigma(j)$ meet, then swap them, and enter stage two.

In our example, we swap $5$ and $6$:  $ 823{\color{red}65}1794\rightarrow 823{\color{red}65}1794;$ $ RLLCCLRRC$.

\medskip

Let $\mu$ be the permutation in which $\sigma(i)$ and $\sigma(j)$ meet. Note that by the transpositions of stage one, on our way from
$\sigma$ we only visit arrangements $\mu$, we only visit permutations $\omega$ satisfying $\pi(\sigma)>\pi(\omega)$.
After reaching this state, in stage two, we take $\sigma(i)$  to  $j$ and $\sigma(j)$ to $i$; and also, in the case where the two particles
were $C$s, potentially take back other particles to their original positions. Note that by  $\sigma(i)$ can reach $j$ by exactly performing the transpositions 
$\sigma(j)$ made in stage one, and vice versa. 
Thus, for any $\omega$  visited on this stage we always will have: $ \pi(\sigma)\left(\max\{\frac{s_R}{s_C},\frac{s_C}{s_L}\}\right)\geq\pi(\omega)$.

\smallskip

In our example, in the second stage we take $6$ and $5$ to the final positions and also $7$ to its original position:

\smallskip

 $823{\color{red}65}1794 \rightarrow 82{\color{red}6}31{\color{red}5}794;  RLCLLCRRC\rightarrow  82{\color{red}6}317{\color{red}5}94\rightarrow  82{\color{red}6}3179{\color{red}5}4\rightarrow 8{\color{red}6}23179{\color{red}5}4 \rightarrow 8{\color{red}6}23719{\color{red}5}4\rightarrow 8{\color{red}6}2719{\color{red}5}4.  $

\medskip

The congestion placed on each edge in $\M_{tree}$ by the paths of $\gamma$ will be bounded by:

$$
 {\cal A}_e= \frac{\sum_{\{\sigma,\tau\vert e \in \gamma_{\sigma\tau}\in \Gamma_{\Ho}\}}
 \vert\gamma_{\sigma,\tau}\vert  {\cal C}(\sigma,\tau)}{{\cal C}(e)}\leq \sum_{i=1}^{n}\sum_{j=1}^nq  \leq q n^3.\hspace{3cm}
 $$
 $$
\hspace{7cm} \text{\tiny ;where } q=\max\{\frac{s_R}{s_C},\frac{s_C}{s_L}\} \text{\tiny is a constant}.
$$

Using the comparison theorem, we complete the proof.

$$
t(\Lc) \leq   \max_{e\in E({\cal M})}{\cal A}_e \pi_{min} t({\cal{M}}_{tree})\leq n^4 t({\cal{M}}_{tree}).
$$
\end{proof}
\subparagraph*{Acknowledgements.}

We would like to thank Dana Randall for a very helpful conversation about the gladiator problem and Fill's conjecture,
and Sergi Elizalde for his help and knowledge concerning generating functions.

\end{document}